\definecolor{labelkey}{rgb}{0.5,0.5,0.5}
\theoremstyle{plain}
\newtheorem{theorem}{Theorem}[section]
\newtheorem{lem}{Lemma}
\newtheorem{lemma}[theorem]{Lemma}
\newtheorem{corollary}[theorem]{Corollary}
\theoremstyle{definition}
\theoremstyle{remark}
\newtheorem*{remark2}{Remark}
\newtheorem*{example}{Example}
\newcommand{\pr}[1]{\mathbb{P}\left( #1 \right)}
\newcommand{\mean}[1]{\mathbb{E}\left[{#1}\right]}
\newcommand{\ceil}[1]{\left \lceil #1 \right \rceil}
\newcommand{\Order}{{\rm O}}
\newcommand{\order}{{\rm o}}
\renewcommand{\cal}{\mathcal}
\newcommand{\eps}{\varepsilon}
\newcommand{\bbox}{\begin{tcolorbox}}
\newcommand{\ebox}{\end{tcolorbox}}
\newenvironment{myitemize}
{\begin{itemize}
\setlength{\itemsep}{0pt}
\setlength{\parskip}{0pt}
\setlength{\parsep}{0pt} 	}
{ \end{itemize} }
\newenvironment{enum}
{\begin{enumerate}
\setlength{\itemsep}{0pt}
\setlength{\parskip}{0pt}
\setlength{\parsep}{0pt} 	}
{ \end{enumerate} }
\newcommand{\bit}{\begin{myitemize}}
\newcommand{\eit}{\end{myitemize}}
\newcommand{\ben}{\begin{enum}}
\newcommand{\een}{\end{enum}}
\newcommand{\beq}[1]{\begin{equation}\label{#1}}
\newcommand{\eeq}{\end{equation}}
\newcommand{\adjacent}{\textsf{adjacent}}
\newcommand{\degree}{\textsf{degree}}
\newcommand{\neighbor}{\textsf{neighborhood}}
\newcommand{\rank}{\textsf{rank}}
\newcommand{\select}{\textsf{select}}
\newcommand{\access}{\textsf{access}}
\newcommand{\cnt}{\textsf{count}}
\newcommand{\report}{\textsf{report}}
\newcommand{\rmq}{\textsf{rmq}}
\newcommand{\rMq}{\textsf{rMq}}
\newcommand {\BL} {\begin{lem}} 
\newcommand {\EL} {\end{lem}} 
\newtheorem{cor}  {Corollary} 
\newcommand {\BCR} {\begin{cor}}
\newcommand {\ECR} {\end{cor}}
\newtheorem{thm} {Theorem} 
\newcommand {\BT} {\begin{thm}}
\newcommand {\ET} {\end{thm}}
\newtheorem{defi} {Problem} 
\newcommand {\BDE} {\begin{defi}}
\newcommand {\EDE} {\end{defi}}
\newtheorem{xxx} {Definition} 
\newcommand {\BD} {\begin{xxx}}
\newcommand {\ED} {\end{xxx}}
\newtheorem{xxxxx} {Observation} 
\newcommand {\BO} {\begin{xxxxx}}
\newcommand {\EO} {\end{xxxxx}}
\journal{Elsevier}
\begin{document}

\begin{frontmatter}

\title{
Succinct Navigational Oracles for Families  of Intersection Graphs on a Circle} 
\author[a]{H{\"{u}}seyin Acan}
\ead{huseyin.acan@drexel.edu}
\author[b]{Sankardeep Chakraborty}
\ead{sankardeep.chakraborty@gmail.com}
\author[c]{Seungbum Jo}
\ead{sbjo@chungbuk.ac.kr}
\author[d]{Kei Nakashima}
\ead{kei\_nakashima@mist.i.u-tokyo.ac.jp}
\author[d]{Kunihiko Sadakane}
\ead{sada@mist.i.u-tokyo.ac.jp}
\author[e]{Srinivasa Rao Satti}
\ead{ssrao@tcs.snu.ac.kr}
\address[a]{Drexel University, United States}
\address[b]{National Institute of Informatics, Japan}
\address[c]{Chungbuk National University, South Korea}
\address[d]{The University of Tokyo, Japan}
\address[e]{Seoul National University, Seoul, South Korea}

\sloppy
\begin{abstract}
We consider the problem of designing succinct navigational oracles, i.e., succinct data structures supporting basic navigational queries such as degree, adjacency and neighborhood efficiently for intersection graphs on  a circle, which include graph classes such as {\it circle graphs}, {\it $k$-polygon-circle graphs}, {\it circle-trapezoid graphs}, {\it trapezoid graphs}.
The degree query reports the number of incident edges to a given vertex, the adjacency query asks if there is an edge between two given vertices, and the neighborhood query enumerates all the neighbors of a given vertex. 
We first prove a general lower bound for  these intersection graph classes, and then present a uniform approach that lets us obtain matching lower and upper bounds for representing each of these graph classes. More specifically, our lower bound proofs use a unified technique to produce tight bounds for all these classes, and this is followed by our data structures which are also obtained from a unified representation method to achieve succinctness for each class. 
In addition, we prove a lower bound of space for representing {\it trapezoid} graphs, and give a succinct navigational oracle for this class of graphs.
\end{abstract}
\begin{keyword}Intersection graph \sep succinct data structure \sep navigational query\end{keyword}
\end{frontmatter}

\section{Introduction}
Intersection graphs of geometric objects are fascinating combinatorial objects from the point of view of algorithmic graph theory as many hard ({\sf NP-complete} in general) optimization problems become easy, i.e., polynomially solvable when restricted to various classes of intersection graphs. Thus, they provide us with clues with respect to the line of demarcation between {\sf P} and {\sf NP}, if there exists such a line. Furthermore, they also have a broad range of practical applications~\cite[Chapter 16]{spinrad}. 
Perhaps the simplest and most widely studied such objects are the interval graphs, intersection graphs of intervals on a line~\cite{Hajos,Golumbic85,Golumbic}. Several characterizations of interval graphs~\cite{Golumbic} including their linear time recognition algorithms are already known in the literature~\cite{HMPV}. There exist many generalizations of interval graphs, and we focus particularly in this work on some of these generalizations involving intersection of geometric objects bound to a circle. 

More specifically, we study circle graphs, $k$-polygon-circle graphs, circle-trapezoid graphs, 
and trapezoid graphs
in this article. A circle graph is defined as the intersection graph of chords in a circle~\cite{bouchet,even}.
{\it Polygon-circle} graphs~\cite{koebe} are the intersection graphs of convex polygons inscribed into a circle, and the special case, when all the convex polygons have exactly $k$ corners, we call the intersection graph $k$-polygon-circle~\cite{EnrightK19}. 
Circle-trapezoid graphs are the intersection graphs of circle trapezoids on a common circle, 
where a circle trapezoid is defined as the convex hull of two disjoint arcs on the circle~\cite{FelsnerMW97}. 
Finally, trapezoid graphs are the intersection graphs of trapezoids between two parallel lines
which can be regarded as a circle with a sufficiently large radius.
These graphs are not only theoretically interesting to study but they also show up in important practical 
application domains, e.g., in VLSI physical layout~\cite{spinrad,Golumbic}. In spite of having such importance and being such basic geometric graphs, we are not aware of any study of these aforementioned objects using the lens of \textit{succinct data structures}~\cite{gonzalo} where we need to achieve the following twofold tasks. The first goal is to bound from below the cardinality of a set $T$ consisting of combinatorial objects with certain property, and this is followed by storing any arbitrary member $x \in T$ using the information theoretic minimum of $\log(|T|)+o(\log(|T|))$ bits (throughout this paper, $\log$ denotes the logarithm to the base $2$) while still being able to support the relevant set of queries efficiently on $x$, both the tasks we focus on here. We assume the usual model of computation, namely a $\Theta (\log n)$-bit word RAM model where $n$ is the size of the input. This is a standard assumption that implies a vertex can be distinguished, in constant time, with a label that fits within a word of the RAM. Finally all the graphs we deal with in this paper are simple, undirected, unlabeled and unweighted.

\subsection{Related Work}
\noindent
{\bf Succinct navigational oracles.} There already exists a huge body of work on representing several classes of graphs succinctly along with supporting basic navigational queries efficiently. A partial list of such special graph classes would be arbitrary graphs~\cite{FarzanM13}, trees~\cite{MunroR01}, planar graphs~\cite{AleardiDS08}, chordal graphs~\cite{MunroW18}, 
graphs with bounded tree-width $k$ (partial $k$-trees)~\cite{FarzanK14}, etc.
Specially, one can consider (i) circular-arc graphs (intersection graphs on the arcs on a circle), (ii) interval graphs (a sub-class of circular-arc graphs), and (iii) permutation graphs (intersection graphs of line segments between two parallel lines) as the special case of the intersection graphs on a circle. 
For interval graphs and circular-arc graphs, Gavoille and Paul~\cite{Gavoille} (and independently, Acan et  al.~\cite{DBLP:conf/wads/AcanCJS19}) showed that $n \log n - \Order(n \log \log n)$ bits are necessary for representing an interval or a circular-arc graph with $n$ vertices. In~\cite{DBLP:conf/wads/AcanCJS19}, the authors also presented succinct navigation oracles for both graph classes. Also for permutation graphs, a lower bound of $(n \log n - \Order(n \log \log))$ bits is known~\cite{BazzaroG09,KohR07}.

\noindent
{\bf Algorithmic graph-theoretic results.} All the intersection graphs that we focus in this paper are very well studied in the algorithmic graph theory literature. Circle graphs (which are essentially same as {\it overlap graphs}\footnote{https://www.graphclasses.org/classes/gc_913.html}) can be recognized in polynomial time along with admitting polynomial time algorithms for various optimization problems like feedback vertex set and independent set (see~\cite{spinrad} and references therein for more details). These graphs were first introduced in the early 1970s, under the name {\it alternance graphs}, as a tool used for sorting permutations using stacks~\cite{even}. The introduction of polygon-circle graphs (which are same as {\it spider graphs}~\cite{koebe}) was motivated by the fact that this class of graphs is closed under taking induced minors. Even though the problem of recognising polygon-circle and $k$-polygon-circle graphs is {\sf NP-complete}~\cite{Pergel07, DBLP:conf/gd/KratochvilP03}, many optimization problems that are otherwise {\sf NP-Complete} on general graphs can be solved in polynomial time given a polygon-circle representation of a graph (see~\cite{spinrad} for more details). Felsner et al.~\cite{FelsnerMW97} introduced circle-trapezoid graphs as an extension of trapezoid graphs and devised polynomial time algorithms for maximum weighted clique and maximum weighted independent set problems.  We refer the reader to~\cite{Golumbic85,Golumbic,spinrad,terry} for more details on these graph classes and other related problems.

\subsection{Our Results}

In this paper, we consider a graph class defined as the intersection graphs of objects on a circle,
where objects are \emph{generalized polygons},
polygons whose corners are on the circle
and edges are either chords or arcs of the circle.
This contains many graph classes including
(1) circular-arc graphs,
(2)~\textit{$k$-polygon-circle graphs}, which are intersection graphs of polygons on a circle, where every polygon has $k$ chords, and
(3) circle graphs
(4) circle-trapezoid graphs.

Note that these example classes correspond to $k$-polygon circle graphs with a fixed $k$, while our upper and lower bounds in fact apply to a more general case when the graph contains polygons with different number of corners.

We first show a space lower bound for representing
the above graph classes (Theorem~\ref{th:generalLB}).
For circle graphs,
we show that the lower bound from Section~\ref{sec:generalLB} can be improved to $n \log n - \Order(n)$ bits.
Furthermore using a similar idea to prove Theorem~\ref{th:generalLB}, we also obtain a space lower bound for representing trapezoid graphs.
These lower bound results 
are summarized in Table~\ref{table:LB}. 

\begin{table}[tb]
	\centering
	\caption{Lower bounds of families of intersection graphs.}
	\label{table:LB}
	\begin{tabular}{|c|c|c|}
		\hline
		Graph class & Space lower bound (in bits) & Reference (this paper) \\ \hline
		circle & $n \log n - \Order(n)$ & Theorem~\ref{thm:lb_circle} \\
	    $k$-polygon-circle & $(k-1)n \log n - \Order(kn \log\log n)$ & Theorem~\ref{th:generalLB}, $k={\rm polylog}(n)$\\
		circle-trapezoid & $3 n \log n - 4 \log\log n - \Order(n)$ & Corollary~\ref{cor:1} \\ 
		trapezoid & $3 n \log n - 4 \log\log n - \Order(n)$ & Lemma~\ref{lem:trapezoid_lb}\\
        \hline
	\end{tabular}
\end{table}

Next, we consider data structures for representing families of intersection graphs on a circle
which support three basic navigation queries efficiently, which are defined as follows. Given a graph $G=(V,E)$ such that $|V|=n$ and two vertices $u,v\in V$, (i) $\degree{}(v)$ query returns the number of vertices that are adjacent to $v$ in $G$, (ii) $\adjacent{}(u,v)$ query returns true if $u$ and $v$ are adjacent in $G$, and false otherwise, and finally (iii) $\neighbor{}(v)$ query returns all the vertices that are adjacent to $v$ in $G$. 

We give a unified representation of families of intersection graphs of generalized polygons on a circle where generalized polygon is define as a shape where every pair of consecutive corners are connected by either an arc or a chord on a circle.
From these results,
we can obtain succinct data structures which can support $\adjacent{}$, $\degree{}$, and $\neighbor{}$ queries efficiently, for all the graphs classes in Table~\ref{table:LB}, including interval and permutation graphs. Note that for the graph classes in Table~\ref{table:LB}, these are the first succinct data structures.

Finally, for circle graphs and trapezoid graphs, we present alternative succinct data structures which support faster $\degree{}$ queries (for  vertices whose degree is $\Omega(\log {n} / \log \log {n})$).



\subsection{Paper Organization}
After listing preliminary data structures
that will be used throughout our paper in Section~\ref{prelims}, we move on to present the central contributions of our work. 
In Section~\ref{sec:general}, we prove the lower bound of space to represent intersection graphs of generalized polygons on a circle, from which the lower bound results in Table~\ref{table:LB} for $k$-polygon-circle and circle-trapezoid graphs follow,
and present our general upper bound result (see Theorem~\ref{thm:generaluppernew}) that provides succinct data structures for all these graphs in a unified manner. 
In Section~\ref{sec:circle}, we give a space lower bound for representing circle graphs which improves the lower bound obtained from Theorem~\ref{th:generalLB}, and also give an alternative succinct representation for circle graphs. 
In Section~\ref{sec:trapezoid}, we give a space lower bound for representing trapezoid graphs, and augment it with an alternative succinct representation for trapezoid graphs. 
Finally, we conclude in Section~\ref{conclusion} with some open problems.
\section{Preliminaries}\label{prelims}
In this section, we introduce some data structures that will be used in the rest of the paper.
\newline
\noindent\textbf{Rank, Select and Access queries.}
Let $A[1 \dots, n]$ be an array of size $n$ over an alphabet $\Sigma = \{0, 1, \dots, \sigma-1\}$ of size $\sigma$. 
Then for $ 1 \le i \le n$ and $\alpha \in \Sigma$, we define the \rank{}, \select{} and \access{} queries on $A$ as follows. 
\begin{itemize}
    \item $\rank_{\alpha}(i, A)$ returns the number of occurrences of $\alpha$ in $A[1  \dots i]$.
	\item $\select_{\alpha}(i, A)$ returns the position $j$ where $A[j]$ is the $i$-th $\alpha$ in $A$.
	\item $\access(i, A)$ returns $A[i]$.
\end{itemize}


Then, the following data structures are known for supporting the above queries.

\begin{lem}[\cite{Clark:1996:EST:313852.314087}]
\label{rankselect2}
Given a bit array $B[1 \dots n]$ of size $n$, there exists an $n + o(n)$-bit data structure which answers $\rank{}_{\alpha}$, $\select{}_{\alpha}$ for $\alpha=\{0, 1\}$, and $\access{}$ queries on $B$ in $O(1)$ time.
\end{lem}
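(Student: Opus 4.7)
The plan is to store the bit array $B$ verbatim using exactly $n$ bits, and then attach three independent auxiliary indexes of total size $o(n)$ bits—one for $\rank{}$, one for $\select{}$, and a universal lookup table of bit patterns—so that each query is answered by a constant number of word-level operations. Since \textbf{access} reduces to reading a single word of $B$ directly, only $\rank{}$ and $\select{}$ require real work, and by symmetry it suffices to build the structures for the value $1$ (the version for $0$ is obtained either by the same construction applied to the complement of $B$, or by the identity $\rank{}_0(i,B)=i-\rank{}_1(i,B)$).

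For $\rank{}$, I would use the standard two-level directory of Jacobson. Partition $B$ into \emph{superblocks} of $s=\lceil\log^2 n\rceil$ bits and each superblock into \emph{blocks} of $b=\lceil\tfrac12\log n\rceil$ bits. Store, for each superblock boundary, the cumulative number of $1$s from the start of $B$; there are $n/s$ such values, each using $\log n$ bits, for a total of $O(n/\log n)=o(n)$ bits. Store, for each block, the cumulative number of $1$s from the start of its superblock; each such value fits in $O(\log\log n)$ bits and there are $n/b$ of them, using $O(n\log\log n/\log n)=o(n)$ bits. Finally precompute a universal table indexed by a block pattern $p\in\{0,1\}^b$ and an offset $j\in[1,b]$ returning $\rank{}_1(j,p)$; the table has $2^b\cdot b\cdot O(\log\log n)=O(\sqrt{n}\,\log n\log\log n)=o(n)$ bits. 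A $\rank{}$ query is answered by summing the superblock count, the in-superblock block count, and a single table lookup against the corresponding block of $B$.

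For $\select{}$, I would follow the recursive scheme of Clark. Group the $1$s of $B$ into consecutive runs of $r=\lceil\log^2 n\rceil$ $1$s each, and record the position of the first $1$ in each group; the total space is $O(n/\log^2 n)\cdot \log n=o(n)$ bits. Classify each group as \emph{sparse} if it spans more than $r^2=\log^4 n$ bits of $B$, and \emph{dense} otherwise. For a sparse group there are at most $n/\log^4 n$ such groups, so the absolute positions of all their $r$ constituent $1$s can be stored explicitly in $O(n/\log^2 n)\cdot\log n=o(n)$ bits, giving $O(1)$ $\select{}$ inside them. For a dense group, repeat the construction one level deeper with subgroups of $r'=\lceil\log\log n\rceil$ $1$s and sparsity threshold $(r')^2(\log r)$ or similar polylogarithmic quantity; the dense leaves then fit in half a word and are resolved by a precomputed universal table analogous to the one above. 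A $\select{}_1(i,B)$ query locates the enclosing group from $i/r$, descends at most one level, and ends in either an explicit table or a bit-pattern lookup, all in $O(1)$.

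The routine parts are the space accounting and the $\rank{}$ directory; the main obstacle is keeping the $\select{}$ structure simultaneously succinct and constant-time, which is exactly why the sparse/dense split and the recursion to a sub-word pattern are needed. Once those are in place, the combined $n+o(n)$-bit structure supports $\rank{}_\alpha$, $\select{}_\alpha$ for $\alpha\in\{0,1\}$, and $\access{}$ in $O(1)$ time on a word RAM with word size $\Theta(\log n)$, as claimed.
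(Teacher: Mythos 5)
The paper does not prove this lemma; it is stated as a preliminary and cited directly from Clark's thesis. Your sketch---the Jacobson two-level superblock/block directory plus lookup table for $\rank{}$, and Clark's sparse/dense group recursion for $\select{}$---is exactly the construction in that cited source, so it is correct and takes the same approach as the reference the paper relies on.
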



\begin{lem}[\cite{BCGNNalgor13}]
\label{rankselect}
Given an array $A[1 \dots n]$ over
$\Sigma = \{0, 1, \dots, \sigma-1\}$ for any $\sigma > 1$, there exists an $n H_0 + 
o(n) \cdot O(H_0+1)$-bit 
data structure that answers $\rank{}_{\alpha}$ and $\access{}$ queries in $O(1+ \log \log \sigma)$ time and $\select{}_{\alpha}$ queries in $O(1)$ time on $S$,
for any $\alpha \in \Sigma$, 
where $H_0 \le \log \sigma$ is the order-$0$ entropy of $A$.
\end{lem}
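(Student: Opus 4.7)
The plan is to apply the alphabet-partitioning technique of Barbay et al. Partition $\Sigma$ into classes $\Sigma_1,\ldots,\Sigma_t$ with $t = O(\log n)$, placing $c \in \Sigma$ in class $k$ exactly when $n_c$ (the number of occurrences of $c$ in $A$) lies in a geometrically scaling band such as $[n/2^k,\, n/2^{k-1})$. Introduce two derived sequences: $M[1..n]$ records the class of each $A[i]$, while for each $k$ the subsequence $S_k$ records the within-class indices of the symbols $A[i]$ for which $M[i]=k$, read in order. A side dictionary of $O(\sigma \log \sigma)$ bits (which is $o(n)$ in the only nontrivial regime $\sigma = o(n/\log n)$) converts between $c$ and its pair $(k,\ell)$. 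Every query on $A$ then factors through two queries on $M$ and $S_k$, for example $\mathrm{select}_c(A,i) = \mathrm{select}_k(M,\, \mathrm{select}_\ell(S_k, i))$, and analogously for $\mathrm{rank}$ and $\mathrm{access}$.

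For the two levels I would instantiate the Golynski--Munro--Rao (GMR) representation, which on a sequence of length $n'$ over alphabet of size $\sigma'$ uses $n'\lceil\log \sigma'\rceil + o(n' \log \sigma')$ bits and achieves $O(1)$ select and $O(\log\log \sigma')$ rank/access. Applying GMR to each $S_k$ separately produces $\sum_k n_k \log |\Sigma_k|$ leading-order bits; because the counts $n_c$ for $c \in \Sigma_k$ agree up to a factor of two, one shows $\sum_k n_k \log |\Sigma_k| = n H_0 + O(n)$, which supplies the entropy term in the statement. The query times compose: select is $O(1) + O(1) = O(1)$ and rank/access are $O(\log\log t) + O(\log\log |\Sigma_k|) = O(\log\log \sigma)$, matching the claim.

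The main obstacle is forcing the redundancy down to the required $o(n)(H_0+1)$; naively storing $M$ by GMR itself contributes $n\log t + o(n \log t) = O(n \log\log \sigma)$ bits, which is too large. I would instead store $M$ using a zeroth-order-compressed block representation (for instance, Ferragina--Venturini blocks of size $\tfrac12 \log n$ equipped with a global decoding table) at cost $n H_0(M) + o(n)$ bits, augmented by an auxiliary $o(n)$-bit select directory and an $O(\log\log t)$-time rank/access oracle. A short calculation shows that $H_0(M) = O(\log H_0 + 1)$, so this cost is absorbed into $o(n)(H_0+1)$. Adding the per-class GMR structures, the compressed $M$, the conversion dictionary, and the select directory then yields the claimed total of $n H_0 + o(n)\cdot O(H_0+1)$ bits with the stated query times.
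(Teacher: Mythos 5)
The paper does not prove Lemma~\ref{rankselect}; it cites it directly from Barbay, Claude, Gagie, Navarro, and Nekrich, and your plan is indeed a reconstruction of that paper's alphabet--partitioning scheme, layered on GMR for the sub-alphabets and a zeroth-order compressed representation for the class sequence $M$. That high-level architecture is correct and is what the cited work does.

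However, the specific partition you choose leaves a genuine gap. With factor-of-two bands $[n/2^{k},n/2^{k-1})$, the frequencies inside a class may differ by up to a full bit of self-information, so the best you can prove is $n H_0(M) + \sum_k n_k \log|\Sigma_k| = nH_0 + O(n)$; the redundancy is $\Theta(n)$, not the required $o(n)\cdot O(H_0+1)$, and when $H_0=O(1)$ this is strictly too big. Barbay et al.\ repair exactly this by using a much finer partition, roughly $k(c)=\lfloor \log n \cdot \log(n/n_c)\rfloor$, giving $\Theta(\log^2 n)$ classes so that frequencies within a class agree up to a factor $\bigl(1+\Theta(1/\log n)\bigr)$; the per-position encoding loss then drops to $O(1/\log n)$ bits and the overall slack becomes $o(n)(H_0+1)$. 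Two subsidiary claims in your write-up are also off: the identity \emph{``$\sum_k n_k \log|\Sigma_k| = nH_0 + O(n)$''} is false as stated (the left side can be much smaller, e.g.\ $n/2$ versus $3n/2$ for frequencies $n/2,n/4,n/4$; the correct statement requires adding $nH_0(M)$), and the conclusion that $nH_0(M)=O(n\log(H_0+1))$ is \emph{``absorbed into $o(n)(H_0+1)$''} does not hold when $H_0=\Theta(1)$, since then $n\log(H_0+1)=\Theta(n)$ while $o(n)(H_0+1)=o(n)$. Both issues disappear once you switch to the finer partition and account for $M$'s entropy together with the per-class costs, which is the route the cited reference actually takes.
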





\noindent\textbf{Range minimum and maximum queries.}
Let $A[1 \dots, n]$ be an array of size $n$ over a totally ordered set.
Then for $ 1 \le i \le j \le n$, we define the \rmq{}, \rMq{} queries on $A$ as follows. 
\begin{itemize}
    \item $\rmq(A, i, j)$: returns the index $m$ of $A$ that attains the minimum value $A[m]$
    in $A[i \dots j]$.  If there is a tie, returns the leftmost one.
    \item $\rMq(A, i, j)$: returns the index $m$ of $A$ that attains the maximum value $A[m]$
    in $A[i \dots j]$.  If there is a tie, returns the leftmost one.
\end{itemize}

\begin{lem}[\cite{DBLP:journals/siamcomp/FischerH11}]
\label{rmq}
Given an array $A[1 \dots n]$ of size $n$ over a totally ordered set, there exists a
$2n + o(n)$-bit data structure which answers $\rmq(A, i, j)$ queries
in $O(1)$ time.
\end{lem}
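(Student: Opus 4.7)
The plan is to reduce $\rmq$ queries on an arbitrary totally ordered array to a structural problem about its Cartesian tree, and then encode that tree in $2n + o(n)$ bits while still supporting the necessary tree navigation in constant time. First I would observe that the value returned by $\rmq(A,i,j)$ depends only on the shape of the Cartesian tree $T(A)$ of $A$: when the nodes of $T(A)$ are listed in in-order, the answer to $\rmq(A,i,j)$ is exactly the index of the lowest common ancestor (LCA) of the nodes corresponding to positions $i$ and $j$. Thus it suffices to store $T(A)$ together with the in-order labeling. Since the number of distinct Cartesian trees on $n$ nodes equals the Catalan number $C_n = \binom{2n}{n}/(n+1)$, the information-theoretic lower bound is $\log C_n = 2n - \Theta(\log n)$ bits, so an encoding in $2n + o(n)$ bits is essentially optimal.

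Next, I would encode $T(A)$ by its balanced parentheses (BP) sequence obtained from a DFS traversal, which takes exactly $2n$ bits. Recovery of the array index from a node of $T(A)$ and vice versa is handled by $\rank$ and $\select$ on the opening-parenthesis subsequence, which by Lemma~\ref{rankselect2} adds only $o(n)$ bits for constant-time access. The core problem is then: given the BP representation, answer LCA in $O(1)$ time using only $o(n)$ additional bits on top of the $2n$ bits of parentheses.

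To do this I would exploit the $\pm 1$ structure of the excess sequence of the BP string (successive excesses differ by exactly one), together with the four-Russians style block trick of Bender--Farach-Colton. Split the excess sequence into blocks of size $b = \tfrac{1}{4}\log n$; within each block, since there are only $O(2^{2b}) = O(\sqrt{n})$ distinct $\pm 1$ patterns, a single lookup table of $o(n)$ bits answers any in-block range minimum in $O(1)$ time. Across blocks, store the position of the block minimum and build a sparse table over the $O(n/\log n)$ block minima, occupying $O((n/\log n)\cdot \log n) = O(n)$ machine words of $O(\log n)$ bits each divided appropriately — a more careful two-level scheme (blocks of size $\tfrac{1}{2}\log^2 n$ for the sparse table, with an intermediate layer for the microblocks) brings the overhead to $o(n)$ bits while preserving $O(1)$ query time. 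An LCA query then reduces to an excess-array $\pm 1$-RMQ between the two corresponding BP positions, and the original array index is recovered via a $\select$ on the opening parentheses.

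The main obstacle is calibrating the block sizes so that both the sparse table and the in-block lookup tables fit in $o(n)$ bits simultaneously; the two-level decomposition above is precisely what makes the accounting work. Everything else — the Cartesian-tree reduction, the BP encoding, and the Bender--Farach-Colton reduction from general RMQ to $\pm 1$-RMQ — is standard once these parameters are in place.
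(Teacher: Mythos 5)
Your high-level strategy---reduce RMQ to LCA on the Cartesian tree, encode the tree as a balanced-parentheses (BP) string in $2n$ bits, and answer LCA via $\pm 1$-RMQ on the excess sequence---is a valid route to the stated bound, but it is not the one Fischer and Heun actually take in the cited reference. Their construction builds the \emph{2d-Min-Heap} of $A$ (the parent of position $i$ is the rightmost $j<i$ with $A[j]<A[i]$, together with a dummy super-root), encodes it with the DFUDS succinct tree representation rather than BP, and shows that $\rmq(A,i,j)$ reduces \emph{directly} to a $\pm 1$-range-minimum over a contiguous interval of the DFUDS excess sequence plus a constant number of $\rank$ and $\select$ operations; no separate LCA machinery is invoked. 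Both routes end in $2n+o(n)$ bits, but the DFUDS path is structurally simpler precisely because it bypasses the LCA detour your plan relies on.

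There is also a concrete calibration error in your block accounting. A sparse table over $m$ items costs $\Theta(m\log^2 m)$ bits; if the superblocks feeding the sparse table have size $\Theta(\log^2 n)$, as you propose, there are $\Theta(n/\log^2 n)$ of them and the sparse table occupies $\Theta\bigl((n/\log^2 n)\cdot\log^2 n\bigr)=\Theta(n)$ bits, not $o(n)$. To push the overhead strictly below linear the superblock size must be $\omega(\log^2 n)$, for instance $\log^{2+\varepsilon}n$, which brings the sparse table to $O(n/\log^\varepsilon n)=o(n)$ bits; the intra-superblock sparse tables and the microblock four-Russians lookup table then each add only $o(n)$ more. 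With that repair your sketch goes through, and the Cartesian-tree reduction, BP encoding, and Bender--Farach-Colton $\pm 1$-RMQ machinery are otherwise standard and correct.
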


Note that the above structure does not access $A$ at query time.
Similarly, one can also obtain a $2n + o(n)$-bit data structure supporting range maximum queries in $O(1)$ time.

\section{Unified Lower and Upper Bounds}
\label{sec:general}
In this section, we give a unified representation of families of
intersection graphs of generalized polygons on a circle.
Here, we define a generalized polygon as a shape where every pair of consecutive corners are connected by either an arc or a chord on the circle. 
We assume that two arcs not adjacent, otherwise we can merge them into a single arc.
Note that we define a single chord (or an arc) as a polygon with two corners.
Since there is no restriction on the number of corners for each polygon, this graph is a generalization of circle, $k$-polygon-circle and circle-trapezoid graphs.
We note that a circular-arc graph can be represented by an intersection graph of generalized polygons with one arc and one chord on a circle, because if a shape on a circle intersects the chord, it always intersects the arc.


\subsection {General Lower Bounds}\label{sec:generalLB}

In this section, we prove the following theorem.

\begin{theorem}\label{th:generalLB}
Consider a class of intersection graphs on a circle consisting of $n$ polygons,
each of which has at most 
$k = {\rm polylog}(n)$
chords or arcs.
Let $n_i$ be the number of 
all polygons on the circle with $i$ corners,
${\bar n}=(n_2, n_3, \ldots, n_k)$,
and
$N = \sum_{i=2}^{k} i \cdot n_i$.
Let $P_{n,k,{\bar n}}$ denote the total number of such graphs.
Then, the following holds:
\[
\log P_{n,k,{\bar n}} 
\geq \sum_{i=2}^{k} n_i \cdot i\log\frac{n}{i}
-n \log n
 -\Order(N \log\log n).
 \]
\end{theorem}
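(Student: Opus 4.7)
The plan is to prove Theorem~\ref{th:generalLB} by a counting argument: lower-bound the number of labeled configurations (placements of the $n$ labeled generalized polygons on the circle consistent with $\bar n$), and divide by an upper bound on the number of configurations that realize any single labeled graph. Since only a lower bound on $P_{n,k,\bar n}$ is needed, it suffices to restrict to configurations in which every edge is a chord; allowing arcs would only enlarge the count.

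\textbf{Step 1 (counting configurations).} Fix $N$ corner positions $1, 2, \ldots, N$ in cyclic order on the circle. A configuration is an assignment of these positions to the $n$ labeled polygons such that polygon $j$ receives $i_j$ positions; the total number is $T := \binom{N}{i_1, \ldots, i_n}$. By Stirling's approximation (with logarithms to base $2$),
\[
\log T \;=\; N \log N - \sum_i n_i \cdot i \log i + \Order\!\left(\log N + \sum_i n_i \log i\right) \;=\; \sum_i n_i \cdot i \log(N/i) + \Order(N \log\log n),
\]
using $\sum_i n_i \log i \le n \log k = \Order(n \log\log n)$ since $k = \mathrm{polylog}(n)$. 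As $N \le k n$, we have $\log(N/i) = \log(n/i) + \Order(\log\log n)$, so
\[
\log T \;=\; \sum_i n_i \cdot i \log(n/i) + \Order(N \log\log n).
\]

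\textbf{Step 2 (multiplicity bound).} The crux is to show that every labeled graph $G$ is realized by at most $n^n \cdot 2^{\Order(N \log\log n)}$ configurations. I plan to do this by an injective encoding: to each configuration, associate the pair $(G, A)$, where $A$ is auxiliary data of $n \log n + \Order(N \log\log n)$ bits specifying (i) a canonical corner for each polygon ($\lceil \log i_j \rceil = \Order(\log\log n)$ bits per polygon, totalling $\Order(n \log\log n)$ bits), and (ii) a permutation of $[n]$ recording the cyclic order in which the canonical corners appear on the circle ($\log n!$ bits). Injectivity means that $(G, A)$ uniquely determines the configuration, so for fixed $G$ the number of realizing configurations is at most the number of possible $A$'s.

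\textbf{Conclusion and main obstacle.} Combining the two steps gives
\[
\log P_{n,k,\bar n} \;\ge\; \log T - n \log n - \Order(N \log\log n) \;\ge\; \sum_i n_i \cdot i \log(n/i) - n \log n - \Order(N \log\log n),
\]
as claimed. The hard part will be verifying the injectivity of the encoding in Step~2: showing rigorously that, given $G$ and $A$, the placement of every non-canonical corner is determined. This will require a geometric argument that once the cyclic order of the $n$ canonical corners is fixed, the adjacencies in $G$ pin down the positions of the remaining corners of each polygon up to the $\Order(\log\log n)$-bit ambiguity per corner already included in $A$; highly symmetric graphs (e.g.\ complete graphs, which achieve the $n!$ lower bound on multiplicity) will likely need to be treated as the extremal case of this reconstruction.
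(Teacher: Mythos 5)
Your approach is genuinely different from the paper's, and it has a real gap that I do not think is merely a matter of ``verification.'' The paper avoids the multiplicity question entirely: it colors $m \approx n/\log n$ mutually non-intersecting polygons, builds each of the remaining $n-m$ polygons so that every side touches exactly two colored ones, and then observes that in the resulting \emph{colored} graph each uncolored vertex is already distinguished by the set of its colored neighbors; this gives a direct injection from tuples of corner-sets into colored graphs, and one only has to pay a factor of $\binom{n_2}{m_2}\cdots\binom{n_k}{m_k}\cdot m!$ to pass from colored to uncolored graphs. You instead try a ``count configurations / divide by multiplicity'' argument, which requires showing that no labeled graph is realized by more than $n^n\cdot 2^{\Order(N\log\log n)}$ configurations.

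The problem is Step~2. Your proposed encoding $(G,A)$, with $A$ = (one canonical corner per polygon) $+$ (a permutation of $[n]$ giving the cyclic order of the canonical corners), is \emph{not} injective. Take $n=3$, $k=2$ (chords), positions $1,\dots,6$, and let the canonical corner always be the smaller endpoint. The configurations with chords $\{1,2\},\{3,4\},\{5,6\}$ and with chords $\{1,4\},\{2,3\},\{5,6\}$ (polygons labeled $1,2,3$ in that order) both realize the empty graph, both select the smaller endpoint of each chord as canonical, and in both the canonical corners of polygons $1,2,3$ occur in the same cyclic order $1,2,3$. So $(G,A)$ coincides but the configurations differ. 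The failure is generic for sparse $G$: when $G$ has few edges, the adjacency data constrains almost nothing about where the $N-n$ non-canonical corners lie, and the number of non-crossing completions is Fuss--Catalan-sized, which your $A$ does not encode. (The multiplicity bound you \emph{claim} may in fact be true --- $n!\cdot\mathrm{FC}(n,k)$ for the empty graph is indeed $n^n 2^{\Order(n\log k)}$ --- but your encoding does not prove it, and a correct proof would essentially require solving the reconstruction problem that the paper's coloring device is specifically designed to sidestep.) As it stands Step~2 is an unproven and, via the counterexample above, incorrectly argued claim, so the proof is incomplete.
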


\begin{proof}
We count the number of graphs in the class of intersection graphs on a circle consisting of $n$ polygons,
each of which has at most $k$ chords and no arcs.
This gives a lower bound of $P_{n,k,{\bar n}}$.

Suppose that the circle polygon graph is given as a polygon circle representation with $n$ polygons on the circle.
We will consider partially-colored circle polygon graphs obtained from the following construction. 
Take $m \le n$ (to be determined) non-intersecting polygons
$A_1,\dots,A_m$ and paint $A_i$ with color $i$. 
Let the set of these $m$ polygons be $S$. 
For each of the remaining $n-m$ polygons, we will choose a subset of $S$, and for each such subset $X$ we will construct a polygon with $|X|$ corners such that distinct corners lie on distinct polygons from $X$. 
Note that 
each edge of such a polygon intersects with exactly two colored polygons.
This construction gives us a polygon-circle graph with $n$ vertices, where $m$ of these vertices are colored and
they form an independent set.
For $2 \le i \le k$, let $n_i$ and $m_i$ ($\le n_i$) be the number of 
all polygons and colored polygons on the circle with $i$ corners respectively. 
Similarly, let $N$ and $M$ be the number of total corners on the all polygons and colored polygons, respectively. 
From the definition, it is clear that $n = \sum_{i=2}^k n_i$, $m = \sum_{i=2}^k m_i$, 
$N = \sum_{i=2}^{k} i \cdot n_i$,  and $M = \sum_{i=2}^{k} i \cdot m_i$.
Let 
${\bar n}=(n_2, n_3, \ldots, n_k)$ and
${\bar m}=(m_2, m_3, \ldots, m_k)$.
Let us denote by $C_{n,k,{\bar n},{\bar m}}$ 
the number of such colored polygon-circle graphs,
and by $P_{n,k,{\bar n}}$ 
the number of polygon-circle graphs
for a given $k$ and ${\bar n}$.

We can first obtain an inequality
$\binom{n_2}{m_2} \binom{n_3}{m_3} \dots \binom{n_k}{m_k} \cdot m! \cdot P_{n,k,{\bar n}} \geq C_{n,k,{\bar n},{\bar m}}$
since every graph counted in $C_{n,k,{\bar n},{\bar m}}$ can be obtained by choosing and coloring $m_i$ polygons from $n_i$ polygons on its polygon circle representation of uncolored one for each $2 \le i \le k$.
Now we will find a lower bound for $C_{n,k,{\bar n},{\bar m}}$, which in turn will give a lower bound for $P_{n,k,{\bar n}}$.
Let us denote the collection of $i$-subsets of $S$ by $S_i$. 
Hence $|S_i| = {m \choose i}$. 
Also let $\mathcal{S}$ be a set of all possible $(k-1)$-tuples
$(Y_2, Y_3, \ldots, Y_k)$ where $Y_i$ is a subset of $S_i$ with $|Y_i| = n_i-m_i$.
Then,
$|\mathcal{S}| = \binom{\binom{m}{2}}{n_2-m_2} \binom{\binom{m}{3}}{n_3-m_3} \dots \binom{\binom{m}{k}}{n_k-m_k}$.
Now, the total number of graphs obtained by the above construction is at least $|\mathcal{S}|$ 
by the following observations:
\begin{enumerate}
\item[(i)] Each element in $\mathcal{S}$ defines at least one colored graph with $\sum_{i=2}^k (n_i-m_i) = n-m$ uncolored polygons (we might get more as the relative order of the corners of polygons within a colored polygon matters).
\item[(ii)] If $\cal T_1$ and $\cal T_2$ are elements of 
$\mathcal{S}$
and $\cal T_1\not= \cal T_2$, then 
the graphs corresponding to these two elements will be different. Basically, in the graphs obtained from this construction, uncolored $n-m$ vertices are distinguishable by only looking at their colored neighbors.
\end{enumerate}

\begin{figure}[t]
\begin{minipage}{.45\textwidth}
\centering
\begin{tikzpicture}[scale=0.4]
  \draw [line width=0.4mm] (0,0) circle (5cm);
  \draw [pink,line width=0.8mm] (10:5) arc [radius=5, start angle=10, end angle=70];
  \draw [green, line width=0.8mm] (80:5) arc [radius=5, start angle=80, end angle=140];
  \draw [yellow,line width=0.8mm] (150:5) arc [radius=5, start angle=150, end angle=210];
  \draw [red,line width=0.8mm] (220:5) arc [radius=5, start angle=220, end angle=280];
  \draw [Cyan,line width=0.8mm] (290:5) arc [radius=5, start angle=290, end angle=350];]
 
  \draw [fill=pink] (10:5)--(40:5)--(70:5)--(10:5) node at (40:5.3) {1};
  \draw [fill=green] (80:5)--(110:5)--(140:5)--(80:5) node at (110:5.3) {2};
   \draw [fill=yellow] (150:5)--(180:5)--(210:5)--(150:5) node at (180:5.3) {3};
  \draw [fill=red] (220:5)--(250:5)--(280:5)--(220:5) node at (250:5.3) {4};
  \draw [fill=Cyan] (290:5)--(320:5)--(350:5)--(290:5) node at (320:5.3) {5};
 
  \draw[thick] (160:5)--(300:5)--(230:5)--(160:5) node at (250:3.5) {$345$};
  \draw[thick] (170:5)-- (60:5)--(100:5)--(170:5) node at (100:3.7) {$123$};
  \draw[thick] (130:5)--(200:5)--(310:5)--(130:5) node at (180:1.5) {235};
 \end{tikzpicture}
\end{minipage}
\begin{minipage}{.45\textwidth}
\centering
\begin{tikzpicture}[scale=0.4]
\draw[fill=pink] (40:5) circle (10pt) node at (40:5) (1) {1};
\draw[fill=green] (110:5) circle (10pt) node at (110:5) (2) {2};
\draw[fill=yellow] (180:5) circle (10pt)node at (180:5) (3) {3};
\draw[fill=red] (250:5) circle (10pt)node at (250:5) (4) {4};
\draw[fill=Cyan] (320:5) circle (10pt) node at (320:5) (5){5};
\draw (90:2.5) circle (10pt) node at (90:2.5) (123){123};
\draw (180:2.5) circle (10pt) node at (180:2.5)  (235) {235};
\draw (270:2.5) circle (10pt) node at (270:2.5) (345) {345};

\draw (1)--(123)--(2);
\draw (123)--(3);
\draw (2)--(235)--(3);
\draw (5)--(235);
\draw (3)--(345)--(4);
\draw (5)--(345);
\draw (123)--(235)--(345)--(123);
\end{tikzpicture}
\end{minipage}
\caption{A realization of $G_{\cal T_1}$, where $G_{\cal T_1}=\{123, 235, 345\}$.}
\label{fig:cp_T1}
\end{figure}
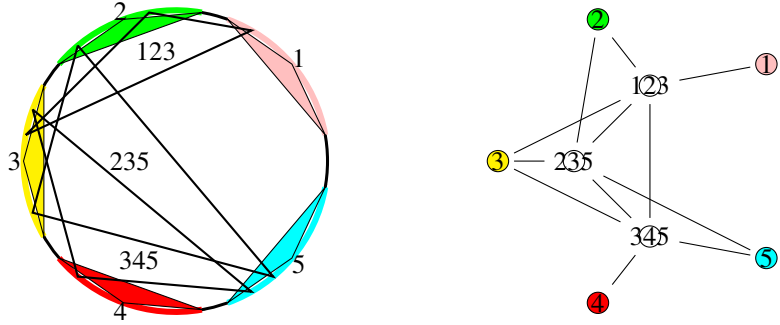
\begin{figure}[h]
\begin{minipage}{.45\textwidth}
\centering
\begin{tikzpicture}[scale=0.4]
  \draw [line width=0.4mm] (0,0) circle (5cm);
  \draw [pink,line width=0.8mm] (10:5) arc [radius=5, start angle=10, end angle=70];
  \draw [green, line width=0.8mm] (80:5) arc [radius=5, start angle=80, end angle=140];
  \draw [yellow,line width=0.8mm] (150:5) arc [radius=5, start angle=150, end angle=210];
  \draw [red,line width=0.8mm] (220:5) arc [radius=5, start angle=220, end angle=280];
  \draw [Cyan,line width=0.8mm] (290:5) arc [radius=5, start angle=290, end angle=350];]
 
  \draw [fill=pink] (10:5)--(40:5)--(70:5)--(10:5) node at (40:5.3) {1};
  \draw [fill=green] (80:5)--(110:5)--(140:5)--(80:5) node at (110:5.3) {2};
   \draw [fill=yellow] (150:5)--(180:5)--(210:5)--(150:5) node at (180:5.3) {3};
  \draw [fill=red] (220:5)--(250:5)--(280:5)--(220:5) node at (250:5.3) {4};
  \draw [fill=Cyan] (290:5)--(320:5)--(350:5)--(290:5) node at (320:5.3) {5};
 
  \draw[thick] (170:5)--(300:5)--(230:5)--(170:5) node at (250:3.5) {345};
  \draw[thick] (160:5)-- (60:5)--(100:5)--(160:5) node at (100:3.7) {123};
  \draw[thick] (130:5)--(200:5)--(310:5)--(130:5) node at (180:2.2) {235};
 \end{tikzpicture}
\end{minipage}
\begin{minipage}{.45\textwidth}
\centering
\begin{tikzpicture}[scale=0.4]
\draw[fill=pink] (40:5) circle (10pt) node at (40:5) (1) {1};
\draw[fill=green] (110:5) circle (10pt) node at (110:5) (2) {2};
\draw[fill=yellow] (180:5) circle (10pt)node at (180:5) (3) {3};
\draw[fill=red] (250:5) circle (10pt)node at (250:5) (4) {4};
\draw[fill=Cyan] (320:5) circle (10pt) node at (320:5) (5){5};
\draw (90:2.5) circle (10pt) node at (90:2.5) (123){123};
\draw (180:2.5) circle (10pt) node at (180:2.5)  (235) {235};
\draw (270:2.5) circle (10pt) node at (270:2.5) (345) {345};

\draw (1)--(123)--(2);
\draw (123)--(3);
\draw (2)--(235)--(3);
\draw (5)--(235);
\draw (3)--(345)--(4);
\draw (5)--(345);
\draw (123)--(235)--(345);
\end{tikzpicture}
\end{minipage}
\caption{Another realization of $G_{\cal T_1}$. Note that the edge between the vertices $123$ and $345$ is missing in this case}
\label{fig:cp_T2}
\end{figure}
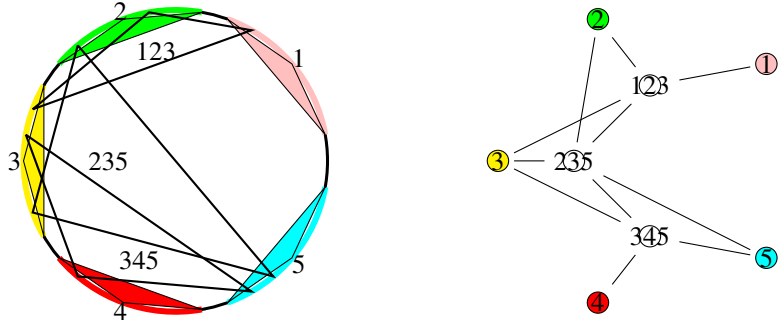
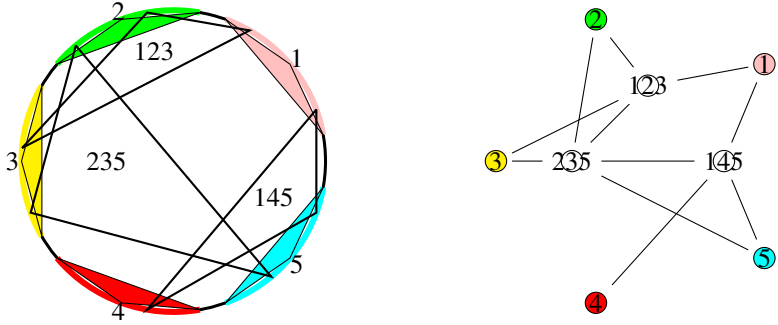
\begin{figure}[bt]
\begin{minipage}{.45\textwidth}
\centering
\begin{tikzpicture}[scale=0.4]
  \draw [line width=0.4mm] (0,0) circle (5cm);
  \draw [pink,line width=0.8mm] (10:5) arc [radius=5, start angle=10, end angle=70];
  \draw [green, line width=0.8mm] (80:5) arc [radius=5, start angle=80, end angle=140];
  \draw [yellow,line width=0.8mm] (150:5) arc [radius=5, start angle=150, end angle=210];
  \draw [red,line width=0.8mm] (220:5) arc [radius=5, start angle=220, end angle=280];
  \draw [Cyan,line width=0.8mm] (290:5) arc [radius=5, start angle=290, end angle=350];]
 
  \draw [fill=pink] (10:5)--(40:5)--(70:5)--(10:5) node at (40:5.3) {1};
  \draw [fill=green] (80:5)--(110:5)--(140:5)--(80:5) node at (110:5.3) {2};
   \draw [fill=yellow] (150:5)--(180:5)--(210:5)--(150:5) node at (180:5.3) {3};
  \draw [fill=red] (220:5)--(250:5)--(280:5)--(220:5) node at (250:5.3) {4};
  \draw [fill=Cyan] (290:5)--(320:5)--(350:5)--(290:5) node at (320:5.3) {5};
 
  \draw[thick] (20:5)--(340:5)--(260:5)--(20:5) node at (340:3.5) {145};
  \draw[thick] (175:5)-- (60:5)--(100:5)--(175:5) node at (100:3.7) {123};
  \draw[thick] (130:5)--(200:5)--(310:5)--(130:5) node at (180:2.2) {235};
 \end{tikzpicture}
\end{minipage}
\begin{minipage}{.45\textwidth}
\centering
\begin{tikzpicture}[scale=0.4]
\draw[fill=pink] (40:5) circle (10pt) node at (40:5) (1) {1};
\draw[fill=green] (110:5) circle (10pt) node at (110:5) (2) {2};
\draw[fill=yellow] (180:5) circle (10pt)node at (180:5) (3) {3};
\draw[fill=red] (250:5) circle (10pt)node at (250:5) (4) {4};
\draw[fill=Cyan] (320:5) circle (10pt) node at (320:5) (5){5};
\draw (90:2.5) circle (10pt) node at (90:2.5) (123){123};
\draw (180:2.5) circle (10pt) node at (180:2.5)  (235) {235};
\draw (0:2.5) circle (10pt) node at (0:2.5) (145) {145};

\draw (1)--(123)--(2);
\draw (123)--(3);
\draw (2)--(235)--(3);
\draw (5)--(235);
\draw (1)--(145)--(4);
\draw (5)--(145);
\draw (123)--(235)--(145);
\end{tikzpicture}
\end{minipage}
\caption{Some realization of $G_{\cal T_2}$, where $G_{\cal T_2}=\{123, 235, 145\}$.}
\label{fig:cpT3}
\end{figure}

\begin{example}~\label{example_cp}
Let $k=3$, $m=5$, and $n=8$. 
We consider triangle-circle graphs (the special case of generalized 3-polygon circle graph when $n_3 = n$, and $n_2 = 0$.) 
with 8 vertices, of which 5 are colored with $1,\dots,5$.
Then $S=\{1,2,3,4,5\}$ and 
$ \cal S=\{123,124,125,134,135,145,234,235,245,345\}$.
Here we simply write $xyz$ to denote the 2-tuple $(\emptyset, \{x,y,z\})$ for $1\le x,y,z \le 5$.
Any $3$-element subset of $\cal S$ will give us a graph.
For instance, let $\cal T_1=\{123, 235, 345\}$ and $\cal T_2=\{123, 235, 145\}$.
The graphs $G_{\cal T_1}$ and $G_{\cal T_2}$ corresponding to subsets $\cal T_1$ and $\cal T_2$, respectively, are different for the following simple reason: $G_{\cal T_1}$ has an uncolored vertex that has colored neighbors $3$, $4$, and $5$ but  $G_{\cal T_2}$ does not have such a vertex. (Compare Figure~\ref{fig:cp_T2} and \ref{fig:cpT3}.) On the other hand, more than one graph might be possible corresponding to only one subset $\cal T_1$ as shown in the Figure~\ref{fig:cp_T1} and \ref{fig:cp_T2}.
Both of them use the same $\cal T_1=\{123, 235, 345\}$.
The edges between colored vertices and uncolored vertices are the same but the set of edges between the uncolored vertices are different.
\end{example}

Now we obtain the lower bound of $\log {P_{n,k,{\bar n}}}$ as follows. From the above arguments, we obtain
$C_{n,k,{\bar n},{\bar m}} \geq \binom{\binom{m}{2}}{n_2-m_2} \binom{\binom{m}{3}}{n_3-m_3} \dots \binom{\binom{m}{k}}{n_k-m_k}$.
Combining with the upper bound of $C_{n,k,{\bar n},{\bar m}}$, we obtain
\[
\log{P_{n,k,{\bar n}}} \geq \sum_{i=2}^{k} {\log{\binom{\binom{m}{i}}{n_i-m_i}}} - \sum_{i=2}^{k} {\log{\binom{n_i}{m_i}}} - \log{m!}.
\]
We set $m=\frac{n}{\log{n}}$.
For each term on the right-hand side, the following inequalities hold (using the inequality ${a\choose b} \ge (a/b)^b$).
\begin{dmath*}
\sum_{i=2}^{k} {\log{\binom{\binom{m}{i}}{n_i-m_i}}}  
\geq \sum_{i=2}^{k} \log{ \left( \frac{\binom{m}{i}}{n_i-m_i} \right) ^ {n_i-m_i} } \\
= \sum_{i=2}^{k} { (n_i-m_i) \log{\binom{m}{i}} } - \sum_{i=2}^{k}  {(n_i-m_i) \log{ (n_i-m_i)} } 
\geq \sum_{i=2}^{k} (n_i-m_i) i\log\frac{m}{i} -n \log n \\
= \sum_{i=2}^{k} (n_i-m_i) i\left(\log\frac{n}{i}-\log\log n\right) -n \log n \\
= \sum_{i=2}^{k} n_i \cdot i\log\frac{n}{i}
 -\sum_{i=2}^{k} m_i \cdot i\log\frac{n}{i}
-N \log\log n -n \log n \\
\geq \sum_{i=2}^{k} n_i \cdot i\log\frac{n}{i}
 -M \log n -N \log\log n -n \log n \\
\end{dmath*}
\[
\sum_{i=2}^{k} \log{\binom{n_i}{m_i}} 
\leq \sum_{i=2}^{k} m_i \log{n}
= m \log{n} \leq n
\]
Therefore, 
\begin{dmath*}
\log{P_{n,k,{\bar n}}} 
\geq \sum_{i=2}^{k} n_i \cdot i\log\frac{n}{i}
 -M \log n -N (\log k+\log\log n) -n \log n
 - \Order(n)
\end{dmath*}

To satisfy $M \le N/\log n$, 
we choose (and color) $m$ polygons as follows.
For $1 \le j \le n$, let $d_j$ be the number of corners of $j$-th polygon in the representation. 
Without loss of generality, we order the polygons to satisfy $d_1 \le d_2 \le \cdots \le d_n$.
Now we claim that $M \le N/\log n$ if we choose first $m$ polygons to be colored.
To prove the claim, suppose $d_{m+1} \ge N/n$. Then $\sum_{j = m+1}^{n} d_j \ge (n-m) \cdot N/n = N(1-1/\log n)$, 
which implies $M = \sum_{j = 1}^{m} d_j \le N/\log n$.
Next, suppose $d_{m+1} < N/n$. In this case, $M \le N/n \cdot m = N/\log n$, which proves the claim.
From the assumption $k = {\rm polylog}(n)$,
$N(\log k + \log \log n) = \Order(N\log\log n)$.
Thus, 
$\qquad \log{P_{n,k,{\bar n}}}
\geq \sum_{i=2}^{k} n_i \cdot i\log\frac{n}{i}
-n \log n
-\Order(N \log\log n).
$
\end{proof}


\begin{corollary}\label{cor:1}
Space lower bounds for 
circle-trapezoid graphs 
and circular-arc graphs
with $n$ vertices
are $3n \log n - \Order(n \log\log n)$ bits
and $n \log n - \Order(n \log\log n)$ bits,
respectively.
\end{corollary}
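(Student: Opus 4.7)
The plan is to obtain both lower bounds as direct instantiations of Theorem~\ref{th:generalLB}, viewing each graph class as a family of intersection graphs of generalized polygons of a fixed corner count and simply substituting the corresponding distribution $\bar n$ into the theorem's conclusion.

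For circle-trapezoid graphs, each circle-trapezoid is, by definition, the convex hull of two disjoint arcs on the common circle, and therefore has exactly four corners on the circle with its boundary alternating arc--chord--arc--chord. Hence I would set $n_4 = n$ and $n_i = 0$ for $i \neq 4$, giving $N = 4n$ and $k = 4 = \Order(1)$. Theorem~\ref{th:generalLB} then specializes to
\[
\log P \;\geq\; 4n\log\frac{n}{4} - n\log n - \Order(n\log\log n) \;=\; 3n\log n - \Order(n\log\log n),
\]
yielding the first bound.

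For circular-arc graphs, I would use the observation made just before Theorem~\ref{th:generalLB}: a circular arc can be faithfully represented as a generalized polygon with one arc and one chord (the ``lens'' enclosed by them), because any shape on the circle meets the chord exactly when it meets the arc. Such a lens has exactly two corners, so setting $n_2 = n$ and $n_i = 0$ for $i \neq 2$ gives $N = 2n$ and $k = 2$, and Theorem~\ref{th:generalLB} yields
\[
\log P \;\geq\; 2n\log\frac{n}{2} - n\log n - \Order(n\log\log n) \;=\; n\log n - \Order(n\log\log n).
\]

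The one point that requires care, which I would treat as the main (though minor) obstacle, is that the construction inside the proof of Theorem~\ref{th:generalLB} is realized using chord-only polygons, whereas these two subclasses force the shapes to involve arcs. For circle-trapezoids, I would repeat the construction but take each four-cornered polygon to be a circle-trapezoid whose two arcs are taken to be arbitrarily short, so that the resulting shape is geometrically indistinguishable from the four-chord polygon on the same four corners and therefore has the same intersection pattern with the rest of the construction. For circular arcs, each chord in the two-cornered construction is replaced by the corresponding lens (the chord plus one of the two subtended arcs); by the observation recalled above, this substitution leaves the intersection pattern unchanged. These routine adjustments confirm that the counting lower bound of Theorem~\ref{th:generalLB} is in fact realized inside circle-trapezoid graphs and inside circular-arc graphs, completing the proof.
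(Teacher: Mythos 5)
Your instantiation of Theorem~\ref{th:generalLB} (setting $n_4=n$ for circle-trapezoids and $n_2=n$ for circular-arc graphs) produces the right numerical bounds, and you are right that the only nontrivial point is whether the chord-only construction inside the proof of Theorem~\ref{th:generalLB} can actually be carried out inside each subclass. Your fix for circular-arc graphs (replace each chord of the $k=2$ construction with the lens consisting of that chord together with one of its two arcs) is exactly what the paper does and is correct, since two such lenses intersect precisely when their arcs do.

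However, your fix for circle-trapezoid graphs is not correct. A circle-trapezoid is the convex hull of two circular arcs, so its four corners are \emph{exactly} the four arc endpoints, and the two arcs are the circular portions of the boundary joining $p_1$ to $p_2$ and $p_3$ to $p_4$. You cannot independently prescribe ``the same four corners'' as the four-chord polygon \emph{and} demand that the two arcs be arbitrarily short: the arc lengths are forced by where the corners sit. In the construction of Theorem~\ref{th:generalLB}, each uncolored polygon must take its $4$ corners on $4$ \emph{arbitrary} colored polygons out of the $m$, and this freedom (roughly $\binom{m}{4}$ choices) is what delivers the $4(n-m)\log m$ term and hence the leading $3n\log n$. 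If you insist that both arcs be short, then each arc's two endpoints lie on adjacent colored polygons, and the uncolored trapezoid is essentially determined by the two positions where these adjacent pairs sit; this leaves you only about $\binom{m}{2}$ configurations per trapezoid and a count of order $n\log n - \Order(n\log\log n)$, which falls short of the claimed $3n\log n - \Order(n\log\log n)$. The paper does the opposite: it keeps the four corners on four arbitrary colored circle-trapezoids (so each of the two chords of an uncolored trapezoid meets exactly two colored ones), and simply accepts that the two long arcs may sweep over intermediate colored trapezoids. The counting argument survives because, for all but a negligible fraction of $4$-tuples, the resulting set of colored neighbors is a union of two well-separated cyclic intervals, whose four extreme points recover the chosen $4$-subset uniquely. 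So the right move is not to suppress the arc intersections but to show they do not destroy the injectivity used in observation~(ii) of Theorem~\ref{th:generalLB}'s proof.
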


\begin{proof}

For circle-trapezoid graphs, 
in the proof of Theorem~\ref{th:generalLB},
we create $m = n/\log n$ colored circle-trapezoids,
consisting of two arcs and two chords, and place them
on the circle so that they do not overlap each other.
Then we place $n-m$ uncolored circle-trapezoids with
two arcs and two chords such that
each chord intersects with exactly two colored circle-trapezoids.
The theorem gives a lower bound for the number of
such graphs, which is a lower bound of the number of
circle-trapezoid graphs.

For circular-arc graphs,
instead of a circle-trapezoid, we consider
a $2$-polygon with one arc and one chord.
Then we obtain the desired bound.
\end{proof}

Note that to obtain a space lower bound for 
trapezoid graphs, we cannot use Theorem~\ref{th:generalLB}
because we cannot distinguish the upper line and 
the lower line. Another proof is given in
Section~\ref{sec:trapezoidlb}.

\subsection{A Succinct Representation} \label{sec:generaluuper}
Now we provide a succinct representation for generalized circle polygon
graph $G$ with $n$ generalized polygons on a circle.
Let $N$ be the total number of corners of the polygons.

Note that the recognition algorithm of general $k$-polygon-circle graphs,
which is a sub-class of the generalized circle polygon graphs,
is NP-complete~\cite{DBLP:conf/gd/KratochvilP03}. 
Thus we assume that $G$ is given as a polygon-circle representation with $n$ polygons,
which is defined (for a graph $G = (V,E)$) as a mapping $\mathcal{P}$ of vertices in $V$ to polygons inscribed into a circle such that $(u, v) \in E$ if and only if $\mathcal{P}(u)$
intersects $\mathcal{P}(v)$. 

Then, a {\it corner-string} of a polygon-circle representation is a string produced by starting at any arbitrary location on the circle, and proceeding around the circle in clockwise order, adding a label
denoting the vertex represented by a polygon each time a corner of a polygon
encountered (denoted by the array $S$ in Figure~\ref{figure:circle_trapezoid}). Note that a single polygon-circle representation has many possible corner-strings, depending on the starting point. As the naive encoding of  $S$ uses 
$N \ceil{\log n}$ 
bits, it is not succinct, and does not support efficient queries.
Therefore we convert $S$ into another representation and add auxiliary data structures
for efficient queries.  First, we convert $S$ into a bit array $F$ of length $N$
and another integer array $S^\prime$ of length 
$N-n$.
The entry $F[i]$ is $1$ if $S[i]$ is the first
occurrence of the value in $S$, and $0$ otherwise.
The array $S^\prime$ stores all entries of $S$
except for the first occurrence of each value
in the same order as in $S$.
We store $F$ using the data structure of Lemma~\ref{rankselect2},
and $S^\prime$ using the data structure of Lemma~\ref{rankselect}.
Then the space becomes
$(N-n)\log n + \Order(N\log n/\log\log n))$
bits, which is succinct.  Using $F$ and $S^\prime$,
we show how to support $\access{}(i, S)$ and $\rank_{\alpha}(i, S)$
in $\Order(\log\log n)$ time
and $\select_{\alpha}(i, S)$ in $\Order(1)$ time.
\begin{itemize}
    \item $\access{}(i, S) = \begin{cases}
                  \rank_{1}(i, F)                    & (\access{}(i, F)=1) \\
                  \access{}(\rank_{0}(i, F), S^\prime) & ({\rm otherwise})
                  \end{cases}$
    \item $\rank_{\alpha}(i, S) = 
                \rank_{\alpha}(\rank_{0}(i,F), S^{'}) +
                    \begin{cases}
                  1 & (\rank_{1}(i,F) \geq S[i]) \\
                  0 & ({\rm otherwise})
                  \end{cases}$
    \item $\select_{\alpha}(i, S) = \begin{cases}
                  \select_{1}(\alpha, F) & (i=1) \\
                  \select_{0}(\select_{\alpha}(i-1, S^\prime), F) & ({\rm otherwise})
                  \end{cases}$
\end{itemize}

We can regard as if the array $S$
were stored and access to $S$ were done in $\Order(\log\log n)$
time. 

Now we show the space bound of the representation.
We compress the corner-string $S$, in which 
a character $2 \le i \le k$ appears $n_i$ times in $S$.
The length of $S$ is $N = \sum_{i=2}^k n_i \cdot i$.
For each character $i$, its first occurrence in $S$ is encoded
in a bit-vector of length $N$.
Other characters are stored in a string $S^\prime$ of length $N-n$.
Each character $i$ appears $n_i-1$ times in $S^\prime$.
We compress $S^\prime$ into its order-$0$ entropy.
Then the total space is
\begin{dmath*}
\sum_{i=2}^{k} n_i (i-1) \log\frac{N-n}{i-1} + \Order(N) 
\leq \sum_{i=2}^{k} n_i (i-1) \log\frac{nk}{i/2} + \Order(N) \\
\leq \sum_{i=2}^{k} n_i \cdot i \log\frac{n}{i}
 - n \log n + \Order(N \log k).
\end{dmath*}
%
If $k = \order(\log n/\log \log n)$, the lower bound of Theorem~\ref{th:generalLB} is
\begin{dmath*}
\log{P_{n,k,{\bar n}}}
\geq \sum_{i=2}^{k} n_i \cdot i\log\frac{n}{i}
-n \log n
 -\Order(N \log\log n)\\
\geq \sum_{i=2}^{k} n_i \cdot i\log\frac{n}{i}
-n \log n -\order(n \log n).
\end{dmath*}
On the other hand, the upper bound is
\begin{dmath*}
\sum_{i=2}^{k} n_i \cdot i \log\frac{n}{i}
 - n \log n + \Order(N \log k)
 \leq \sum_{i=2}^{k} n_i \cdot i \log\frac{n}{i}
 - n \log n + \order(n \log n).
\end{dmath*}
Therefore this upper bound matches the lower bound, up to lower order terms.

\subsection{Query Algorithms}\label{sec:query}
We now describe how to support navigation queries using the data structure of Section~\ref{sec:generaluuper}.
Consider a vertex $u$ in $G$.
Assume the vertex $u$ corresponds to a $k$-polygon, which is represented by $k$ many integers $u$ in $S$.
The polygon has $k$ edges, for $i$-th edge
($1 \le i \le k-1$), we consider an interval of $S$
between $i$-th occurrence of $u$ and $(i+1)$-st
occurrence of $u$, that is,
$[\select_{u}(i, S),\select_{u}(i+1, S)]$.
Let $I(u, i)$ denote this interval.
For $k$-th edge, the interval becomes the union of
$[\select_{u}(k, S),N]$
and $[1,\select_{u}(1, S)]$.

Consider two polygons $u$ and $v$.
We check for each side $e$ of $u$ if $e$ intersects with a side $f$ of $v$.
Let $I(u, i) = [\ell, r]$ be the interval of $e$
and $I(v, j) = [s,t]$ be the interval of $f$.
There are four cases.
(1) $e$ is a chord and $f$ is a chord.
Then $e$ and $f$ intersect iff 
$[\ell, r] \cap [s, t] \neq \emptyset$, $[s,t] \not\subset [\ell, r]$, and $[\ell, r] \not\subset [s, t]$. 
(2) $e$ is an arc and $f$ is a chord.
This case is the same as (1) in addition the case when $[s,t] \subset [\ell, r]$.
(3) $e$ is a chord and $f$ is an arc.
This case is the same as (1) in addition to the case when $[\ell,r] \subset [s, t]$.
(4) $e$ is an arc and $f$ is an arc.
Then $e$ and $f$ intersect
iff $[\ell, r] \cap [s, t] \neq \emptyset$.

We add new data structures $N$, $N_a$, $P_a$, $N_c$, $P_c$, and $A$
defined as follows.
Let $I(u, i) = [\ell_i, r_i]$ denote the $i$-th interval of $S$, 
defined above, and $d_u$ be the number of corners of $u$.
Then $A$ is a bit array of length $N$ where $A[\ell_i] = 1$ 
if and only if $[\ell_i, r_i]$ corresponds to an arc of $u$.
The arrays $N$, $N_a$, $P_a$, $N_c$, and $P_c$ are defined as follows
where $u = S[i]$.
\begin{align*}
    N[i] &= \begin{cases}
            \select_{u}(\rank_{u}(i, S)+1,S) & (\mbox{if $\rank_{u}(i, S) < d_{u}$}) \\
            \infty & (\mbox{otherwise})
            \end{cases}\\
    N_a[i] &= \begin{cases}
            \select_{u}(\rank_{u}(i, S)+1,S) & (\mbox{if $A[i] = 1$
            and $\rank_{u}(i, S) < d_{u}$}) \\
            \infty & (\mbox{if $A[i] = 1$
            and $\rank_{u}(i, S) = d_{u}$}) \\
            0  & (\mbox{otherwise})
            \end{cases}\\
    P_a[i] &= \begin{cases}
            \select_{u}(\rank_{u}(i, S)-1,S) & (\mbox{if the side ending at $S[i]$ is an arc
            and $\rank_{u}(i, S) > 1$}) \\
            0 & (\mbox{if the side ending at $S[i]$ is an arc
            and $\rank_{u}(i, S) = 1$}) \\
            \infty  & (\mbox{otherwise})
            \end{cases}\\
    N_c[i] &= \begin{cases}
            \select_{u}(\rank_{u}(i, S)+1,S) & (\mbox{if $A[i] = 0$
            and $\rank_{u}(i, S) < d_{u}$}) \\
            0  & (\mbox{otherwise})
            \end{cases}\\
    P_c[i] &= \begin{cases}
            \select_{u}(\rank_{u}(i, S)-1,S) & (\mbox{if the side ending at $S[i]$ is a chord
            and $\rank_{u}(i, S) > 1$}) \\
            \infty  & (\mbox{otherwise})
            \end{cases}
\end{align*}
The array $N_a[i]$ ($N_c[i]$) stores the other endpoint of an arc (a chord)
starting from $S[i]$. Similarly, the array $P_a[i]$ ($P_c[i]$) stores the other endpoint of an arc (a chord) ending with $S[i]$. Finally, the array $N[i]$ stores the other endpoint of an arc or a chord starting from $S[i]$.
The difference between $N_a$ and $N_c$ is that in $N_c$, we do not store the last side of a polygon.
We do not store these arrays explicitly;
we store only the range maximum data structures for $N$, $N_a$ and $N_c$,
and the range minimum data structures for $P_a$ and $P_c$.
We can obtain any entry of these arrays in $\Order(\log\log n)$ time
using the above formula.

\begin{lem}\label{lem:k-polygon-circle1}
If generalized polygons $u$ and $v$ intersect,
there exists a side $e$ of $u$ with interval $[\ell,r]$
and a side $f$ of $v$ with interval $[s, t]$ satisfying
at least one of the following.
\begin{enumerate}
\item[(1)] Both $e$ and $f$ are chords, 
neither $e$ or $f$ is the last side, 
and $(\ell < s < r$ and $r < t (= N_c[s]))$
or $(\ell < t < r$ and $\ell > s (=P_c[t]))$.
\item[(2)] $e$ is an arc and $f$ is a chord, 
$f$ is not the last side, and 
 $\ell < t(=N_a[s]) < r$ or $\ell < s(=P_a[t]) < r$.
\item[(3)] $e$ is a chord and $f$ is an arc, 
$e$ is not the last side, 
and $s < r< t(=N_a[s])$ or $s(=P_a[t]) < \ell < t$. 
\item[(4)] Both $e$ and $f$ are arcs, and
$s < r$ and $\ell < t (= N_a[s])$.
\end{enumerate}
\end{lem}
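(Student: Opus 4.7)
The plan is to prove the lemma by reducing intersection of $u$ and $v$ to a side-crossing and case-splitting on the types (chord or arc) of the two crossing sides. Since all corners of $u$ and $v$ lie on the common circle, whenever the two polygons intersect as regions on the disk there must exist some pair $(e,f)$ of sides, one from each polygon, that cross geometrically; fix such a pair and let $[\ell,r]$ and $[s,t]$ denote their intervals in the corner-string, with the sides listed in the natural cyclic order.

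Case analysis on the types then yields the four conditions in the statement. In the chord--chord case (1), two chords on the circle cross iff their endpoints are interleaved along the circle: either $\ell<s<r<t$, in which case $s$ lies inside $[\ell,r]$ and the next corner of $v$ along $S$ is $t=N_c[s]>r$, or $s<\ell<t<r$, in which case $t$ lies inside $[\ell,r]$ and $s=P_c[t]<\ell$. The arc--arc case (4) uses the simpler fact that two circular arcs intersect iff their intervals on the circle overlap, which directly yields $s<r$ and $\ell<N_a[s]$; the $\infty$ convention in $N_a$ for a last corner makes any wrap-around restriction unnecessary here. In the mixed cases (2) and (3), the key geometric observation is that a chord and an arc intersect iff some endpoint of the chord lies on the arc, which translates to an endpoint of the chord being in the open interval of the arc, and that endpoint is recovered by reading the appropriate entry of $N_a$ or $P_a$ at the arc's start or end.

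The main obstacle is justifying the ``not the last side'' qualifiers in cases (1)--(3): if the only crossing witness involved a wrap-around side then the arrays $N_c,P_c,N_a,P_a$ would not expose the partner corner and the translation would fail. The plan is to handle this by a short topological argument showing that whenever two polygons intersect a non-wrap-around witness always exists: listing the corners of $u$ and $v$ in corner-string order as $p_1<\cdots<p_{d_u}$ and $q_1<\cdots<q_{d_v}$, if the only crossing involves the wrap-around sides $p_{d_u}p_1$ and $q_{d_v}q_1$, then interleaving forces $q_1$ to lie in some open interval $(p_i,p_{i+1})$ while $q_{d_v}$ lies outside $[p_1,p_{d_u}]$; the next side of $v$ leaving $q_1$ along $S$ then has its other endpoint on the opposite side of the chord $p_ip_{i+1}$ from $q_1$, so it must cross this non-wrap-around side of $u$, producing the desired witness. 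The arc and mixed variants of this reduction follow the same template, replacing ``crosses the chord'' by ``meets the arc'' and using that an arc starting inside $(p_i,p_{i+1})$ and ending outside necessarily shares a point with $p_ip_{i+1}$.
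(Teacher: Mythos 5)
The paper does not actually supply a proof of Lemma~\ref{lem:k-polygon-circle1}: after the statement it records only the wrap-around convention (``for an arc $e$ that is the last side of $u$, the interval is divided into two \ldots'') and the worked example around Figure~\ref{figure:circle_trapezoid}, and then moves directly to the query algorithms. So there is no proof to compare against, and your case decomposition into chord--chord, chord--arc, arc--chord and arc--arc crossings, with the translation of each geometric crossing into interval relations, is exactly the kind of argument the lemma tacitly rests on. That part is fine, as is your observation that since both polygons have all corners on the circle, neither can contain the other, so some pair of sides must cross geometrically.

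There are two concrete gaps in the wrap-around portion of your plan. First, you assert a single claim --- ``whenever two polygons intersect a non-wrap-around witness always exists'' --- and say the arc and mixed variants ``follow the same template.'' That claim is false for arcs: two polygons can intersect exclusively through a wrap-around arc (for instance two circular-arc vertices whose straight-line sides are disjoint but whose arcs both contain the cut point), and no reduction will conjure a non-wrap-around witness. The paper handles this differently, not by reduction but by the splitting convention in the note after the lemma; you need to adopt that, and restrict the reduction argument to wrap-around \emph{chord} sides only (which is consistent with the lemma, since condition~(4) has no ``last side'' caveat while (1)--(3) do). Second, the reduction you sketch for the wrap-around chord case is not correct as stated. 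You say that if $q_1\in(p_i,p_{i+1})$ then ``the next side of $v$ leaving $q_1$ along $S$ then has its other endpoint on the opposite side of the chord $p_ip_{i+1}$,'' but $q_2$ can also lie in $(p_i,p_{i+1})$. The right move is to walk $q_1,q_2,\dots$ until the first corner $q_j$ that leaves the arc $(p_i,p_{i+1})$; that $q_{j-1}q_j$ is a non-last side of $v$, and since $q_{j-1}\in(p_i,p_{i+1})$ while $q_j$ is outside, this side meets the non-last side $p_ip_{i+1}$ of $u$ whether it is a chord or an arc, landing in condition~(1), (2) or (3). You also need the symmetric walk when only one of the two last sides is involved in the crossing, not just when both are. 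Finally, you should note (though it is the paper's defect and not yours) that condition~(2) of the lemma as printed uses $N_a$ and $P_a$ where it must mean $N_c$ and $P_c$: since $f$ is a chord, $N_a[s]=0$ makes the stated inequality vacuous, and the corresponding query algorithm in Section~\ref{sec:query} indeed uses $N_c$ and $P_c$ for this case.
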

Note that for an arc $e$ that is the last side of $u$,
the interval is divided into two.
We regard as if $e$ is divided into two arcs
and apply the lemma to each of them.

\begin{figure}
\begin{center}
\includegraphics[scale=0.30]{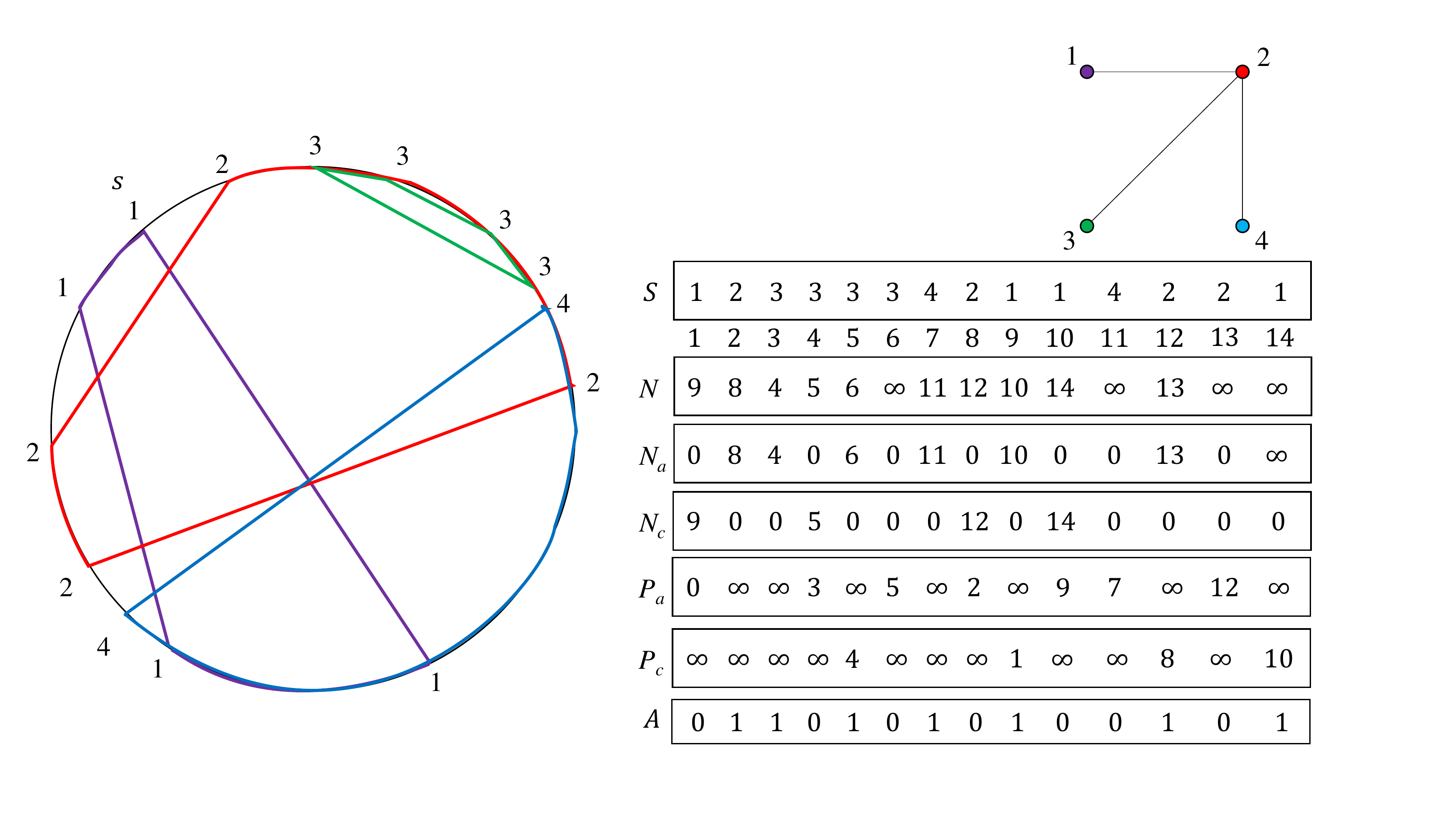}
\end{center}
\caption{A circle-trapezoid graph with $n=4$ given as polygon-circle representation (left), and graph form (top-right).  The array $S$ encodes the corner-string starting from the marked location $s$ on the circle. The arrays $S, N_a, P_a, N_c, P_c$ are not stored; we store only $F$, $S^\prime$, $A$, range max data structures for $N, N_a, N_c$, and range min data structures for $P_a, P_c$.  In the figure we omit $F$ and $S^\prime$.}
\label{figure:circle_trapezoid}
\end{figure}

Figure~\ref{figure:circle_trapezoid} shows an example of our representation.
For a chord of polygon $2$ whose interval is $[8,12]$,
polygons $1$ and $4$ intersect with the chord because
$N_c[10]=14 > 12$ and $P_c[9] = 1 < 8$.
For a chord of polygon $3$ whose interval is $[5,6]$,
polygon $2$ intersects with the chord because
$N_a[2]=8 > 6$.
For an arc of polygon $1$ whose interval is $[9,10]$,
polygons $4$ intersects with the arc because
$N_a[7]=11 > 9$.

Using this idea, we obtain an algorithm
for \adjacent{} query.
\\
\noindent\textbf{$\adjacent{}\boldsymbol{(u, v)}$ query: } Consider the intervals $I(u, 1)$, $I(u, 2), \ldots$, $I(u, d_u)$
and $I(v, 1)$, $I(v, 2), \ldots$, $I(v, d_v)$.
We scan these intervals in the clockwise order on the circle,
and for each endpoint of an interval, we check the condition of Lemma~\ref{lem:k-polygon-circle1}.
For each interval, checking this condition takes $O(\log\log n)$ time, and since we need to check at most $d_u + d_v$ intervals, the time complexity is $O(k \log\log n)$.

Next we consider $\neighbor{}\boldsymbol{(u)}$ query.
For each side (chord or arc) of $u$, we want to enumerate all generalized polygons $v$
satisfying the conditions of Lemma~\ref{lem:k-polygon-circle1}.
For each chord $e$ of $u$, we can find all chords which intersects with $u$
as follows.  Let $[\ell,r]$ be the interval of $e$.
First we obtain $m = \rMq(N_c, \ell, r)$.
If $N_c[m] \le r$, all entries of $N_c$ in $[\ell, r]$
are less than $r$, and there are no polygons
intersecting $e$.  Therefore we stop enumeration.
If $N_c[m] > r$, the polygon $S[m]$ intersects with $u$.
To check if there is another such polygon,
we recursively search for $[\ell, m-1]$ and $[m+1,r]$.
The time complexity is $\Order(d \log\log n)$
where $d$ is the number of entries $m$ such that
$N_c[m] > r$.
We also process $P_c$ analogously.


For an arc $e$ of $u$, we can enumerate all chords of the other
generalized polygons which intersects with $e$ is obtained by finding all $S[m]$ such that $\ell < m < r$.  Such distinct $m$ can be obtained
by finding all $m$ such that (i) $\ell < m < r$, and (ii) $\ell < N_c[m]$ or $P_c[m] < r$
using the range maximum data structure.

For a chord $e$ of $u$, we can enumerate all arcs of other generalized polygons
which intersects with $e$ is obtained by finding all $S[m]$ such that
$1 \le m < r$ and $N_a[m] > r$, or
$\ell < m \le N$ and $P_a[m] < \ell$.

For an arc $e$ of $u$, we can enumerate all arcs of other generalized polygons
which intersects with $e$ is obtained by finding all $S[m]$ such that
$1 \le m < r$ and $N_a[m] > \ell$, or
$\ell < m \le N$ and $P_a[m] < r$.

\noindent\textbf{$\neighbor{}\boldsymbol{(u)}$ query: }For
each interval $I(u, i)=[\ell, r]$, we output all polygons
$S[m]$ satisfying one of the above conditions.
However there may exist duplicates.  To avoid
outputting the same polygon twice, we use a bit array
$D[1\ldots,n]$ to mark which polygon is already output.
The bit array is initialized by $0$ when we create
the data structure.  At a query process,
before outputting a polygon $v$,
we check if $D[v]=1$.  If it is, $v$ is already output
and we do not output again.  If not, we output $v$
and set $D[v]=1$.  After processing all intervals
of $u$, we have to clean $D$.  To do so, we run
the same algorithm again.  But this time we output
nothing and set $D[v]=0$ for all $v$ found by the algorithm.  The time complexity is 
$\Order(|\degree{}(v)| \cdot k\log\log n)$
where $k$ is the maximum number of sides in each generalized polygon.
\\
\noindent\textbf{$\degree{}\boldsymbol{(v)}$ query: } The  $\degree{}\boldsymbol{(v)}$ can be answered by returning
the size of the output of the $\neighbor{}\boldsymbol{(u)}$ query, in
$O(|\degree{}(v)| \cdot k\log\log n)$ time. 
Note that by adding an integer array of length $n$
storing the degree of each vertex explicitly,
$\degree{}\boldsymbol{(v)}$ can be supported in $O(1)$ time. The whole data structure is still succinct
if 
$k = \omega(1)$, 
but it is not if $k = O(1)$.

Finally, we show how one can represent various classes of intersection graphs
by our representation.  Generalized polygons in each class are represented as follows.
\begin{itemize}
    \item $k$-polygon-circle: set all $A[i]=0$ for all $i$ (all sides are chords).
    \item circle-trapezoid: the number of sides is $4$ and arcs and chords appear alternately.
    \item trapezoid: we split a circle in half equally (upper and lower part), and both upper and lower part have 2 corners. Now arcs and chords appear alternately, from the arcs on the upper part.
    \item circle and permutation: the number of sides is $2$ and all sides are chords.
    \item circular-arc and interval: the number of sides is $2$ and there are an arc and a chord.
    Set all the entries of $N_c$ to be $0$ and $P_c$ to be $\infty$ so that
    the query algorithms do not output any chord.
\end{itemize}


Thus, we obtain the following result.
\begin{theorem}\label{thm:generaluppernew}
Consider an intersection graph of $n$ (generalized) polygons on a circle.
Let $n_i$ be the number of 
all polygons on the circle with $i$ corners 
($2 \le i \le k$),
where $k$ is the maximum number of corners among the polygons on a circle,
and $N$ be the total number of corners of the polygons.
There exist a 
$\left(\sum_{i=2}^{k} n_i \cdot i \log\frac{n}{i}
 - n \log n + \Order(N \log k) \right)$-bit
representation of the graph
that can support
$\adjacent{}(u, v)$ query in $\Order(k\log\log n)$ time, and
$\neighbor{}(v)$ and $\degree{}(v)$ queries in $\Order(k|\degree{}(v)| \cdot \log\log n)$ time.
Also, the representation is succinct 
(i.e., matches the lower bound of Theorem~\ref{th:generalLB} to within lower order terms)
when $k = \order(\log n/\log \log n)$.
\end{theorem}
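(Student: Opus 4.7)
My plan is to assemble Theorem~\ref{thm:generaluppernew} from the components already developed in Sections~\ref{sec:generaluuper} and~\ref{sec:query}, verifying that the three claims (space, query times, and succinctness) all hold simultaneously. First I would specify the data structure as the union of (i) the bit-array $F$ of length $N$ stored using Lemma~\ref{rankselect2}, (ii) the entropy-compressed array $S^\prime$ of length $N-n$ stored using Lemma~\ref{rankselect}, (iii) the bit-array $A[1\dots N]$ marking which sides are arcs, and (iv) the $2N+\order(N)$-bit range-min/range-max structures of Lemma~\ref{rmq} built on the five virtual arrays $N$, $N_a$, $N_c$, $P_a$, $P_c$. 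I would emphasize that these five arrays are \emph{never} stored explicitly: any entry is recomputed in $\Order(\log\log n)$ time from the displayed formulas in Section~\ref{sec:query}, each of which is a constant number of $\rank{}_u$/$\select{}_u$ calls on $S$ that are in turn simulated through $F$ and $S^\prime$ by the conversion formulas of Section~\ref{sec:generaluuper}.

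The space accounting essentially follows the computation in Section~\ref{sec:generaluuper}. The parts $F$, $A$, and the five range-min/max structures contribute $\Order(N)$ bits. The entropy bound for $S^\prime$ gives $\sum_{i=2}^{k} n_i(i-1)\log\frac{N-n}{i-1}+\Order(N)$, and using $N\le nk$ and $\log\frac{N-n}{i-1}\le \log\frac{n}{i}+\Order(\log k)$, this collapses to the target
\[
\sum_{i=2}^{k} n_i\cdot i\log\frac{n}{i}-n\log n+\Order(N\log k)
\]
bits, exactly as displayed in Section~\ref{sec:generaluuper}. Combining this with the lower bound of Theorem~\ref{th:generalLB}, both slack terms $\Order(N\log k)$ and $\Order(N\log\log n)$ are $\order(n\log n)$ under $k=\order(\log n/\log\log n)$, which yields succinctness.

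For the query time bounds I would invoke Lemma~\ref{lem:k-polygon-circle1} as the geometric engine and analyse each routine in turn. For $\adjacent{}(u,v)$, I would compute the $d_u+d_v\le 2k$ interval endpoints of $u$ and $v$ via $\select{}_u$ and $\select{}_v$, sort them in clockwise order on the circle, and for each endpoint test the four cases of Lemma~\ref{lem:k-polygon-circle1}; each test costs $\Order(\log\log n)$ (a constant number of $\access$/$\rank$ calls and lookups into $A$, $N_a$, $N_c$, $P_a$, $P_c$), so the total cost is $\Order(k\log\log n)$. For $\neighbor{}(v)$, for every side $e$ of $v$ with interval $[\ell,r]$ I would run the standard divide-and-conquer recursion on the appropriate range-max (or range-min) structure to enumerate exactly the positions $m\in[\ell,r]$ whose value exceeds $r$ (respectively falls below $\ell$), together with the simpler scans described at the end of Section~\ref{sec:query} for arc-vs-chord and arc-vs-arc pairs. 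Each call that \emph{reports} a match does $\Order(\log\log n)$ work before recursing, so a side producing $d$ matches costs $\Order(d\log\log n)$, and summing over all sides and all four cases gives $\Order(k\cdot|\degree{}(v)|\cdot\log\log n)$ in total. Duplicates are suppressed by the bit-array $D[1\dots n]$ and the two-pass cleaning trick; the latter preserves the asymptotic bound because the clean-up pass revisits exactly the vertices that were output. Finally, $\degree{}(v)$ returns the size of the reported neighborhood within the same bound.

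The main obstacle I expect is a complete case verification for $\neighbor{}(v)$: one must check that, as claimed after Lemma~\ref{lem:k-polygon-circle1}, the arcs that cross the starting point $s$ of the corner-string (so that their interval in $S$ splits into $[\select{}_u(d_u,S),N]\cup[1,\select{}_u(1,S)]$) are handled correctly by splitting each such arc into two pieces and running the enumeration on both, and that the range-max/range-min recursion on $N_c,P_c,N_a,P_a$ really reports every intersecting polygon at least once across the four cases. Once this bookkeeping is discharged, combining the space analysis, the query-time analysis, and the comparison with Theorem~\ref{th:generalLB} completes the proof.
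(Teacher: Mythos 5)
Your proposal is correct and follows essentially the same route as the paper: it assembles the theorem from the $F/S'$ compressed corner-string of Section~\ref{sec:generaluuper}, the bit-array $A$ and the implicit arrays $N,N_a,N_c,P_a,P_c$ accessed through range-max/range-min structures of Section~\ref{sec:query}, and Lemma~\ref{lem:k-polygon-circle1} as the geometric intersection criterion, with the same space accounting and the same comparison against Theorem~\ref{th:generalLB} for succinctness. The one caveat you flagged (arcs wrapping past the start of the corner-string) is exactly the point the paper handles by splitting such an arc into two, and the rest of your case analysis matches the paper's query routines step for step.
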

\begin{corollary}\label{cor:simple}
For an intersection graph of $n$ (generalized) polygons
with at most $k$ corners
on a circle, let $N$ be the total number of corners of the polygons.
There exist an $((N-n)\log n + \Order(N \log k))$-bit representation of the graph.
For any $k$-polygon-circle graph, there exists
a $((k-1)n\log n + \Order(nk \log k))$-bit representation.
\end{corollary}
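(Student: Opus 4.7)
The plan is to derive both bounds as direct algebraic consequences of the space bound from Theorem~\ref{thm:generaluppernew}, namely $\sum_{i=2}^{k} n_i \cdot i \log(n/i) - n \log n + \Order(N \log k)$ bits, where $N = \sum_{i=2}^{k} i\, n_i$ and $n = \sum_{i=2}^{k} n_i$.

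For the first bound, I would split $i\log(n/i) = i\log n - i\log i$, so that the leading sum becomes $\sum_{i} n_i\, i \log n - \sum_{i} n_i\, i \log i = N\log n - \sum_{i} n_i\, i \log i$. Subtracting the $n\log n$ term then gives $(N-n)\log n - \sum_{i} n_i\, i \log i$. Since each summand is nonnegative (as $i \ge 2$ and $n_i \ge 0$), dropping the subtracted sum only inflates the estimate, yielding $(N-n)\log n + \Order(N\log k)$ bits, the first claim of the corollary.

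For the $k$-polygon-circle specialization, I would set $n_k = n$ and $n_i = 0$ for $i \ne k$, so that $N = kn$. Substituting directly into the bound of Theorem~\ref{thm:generaluppernew} gives $kn \log(n/k) - n\log n + \Order(nk\log k) = (k-1)n\log n - kn\log k + \Order(nk\log k)$, and since $kn\log k = \Order(nk\log k)$, the negative linear-in-$\log k$ contribution is absorbed into the error term, producing the claimed $(k-1)n\log n + \Order(nk\log k)$ bits.

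Both statements reduce to a single substitution and an elementary bounding step on top of Theorem~\ref{thm:generaluppernew}, so I do not anticipate any significant obstacle; the only points of care are the nonnegativity of $\sum_{i} n_i\, i\log i$ (immediate from $i \ge 2$) and the fact that the discarded $-kn\log k$ term fits inside the stated $\Order(nk\log k)$ error.
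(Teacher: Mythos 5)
Your derivation is correct and follows the same route the paper implicitly intends: the corollary is stated with no separate proof precisely because it falls out of Theorem~\ref{thm:generaluppernew} by elementary algebra, and your two steps — splitting $i\log(n/i) = i\log n - i\log i$ and dropping the nonnegative subtracted sum $\sum_i n_i\, i\log i$ for the general bound, then substituting $n_k=n$, $N=kn$ and absorbing $-kn\log k$ into $\Order(nk\log k)$ for the $k$-polygon-circle case — are exactly the intended reduction.
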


\section{Circle graphs}\label{sec:circle}
In this section, we first show that for circle graphs, the lower bound of Theorem~\ref{th:generalLB} can be improved to $n\log n-\Order(n)$ bits. Next, we give an alternative succinct representation of circle graphs, which can answer $\degree{}(v)$ queries independent of $|\degree{}(v)|$, but takes more time for the other two queries compared to the representation of Theorem~\ref{thm:generaluppernew}.
\subsection{Lower bound}
In this section we show that $\log C_n \ge n\log n-O(n)$ as $n \to \infty$, where $C_n$ is the number of unlabeled circle graphs with $n$ vertices.
We first take a circle with $2n$ equally spaced points on it, and label the points 1 to $2n$ clockwise such that the first $n$ points lie on the upper semicircle and the rest lie on the lower semicircle. These $2n$ points will be the endpoints of $n$ disjoint chords. First, on each semicircle, we take $k$ chords, each of which determines an arc with $\ell$ points on it, excluding the endpoints of the chord. So the first chord on the upper semicircle will connect the points 1 and $\ell+2$, the next chord will connect the points $\ell+3$ and $2\ell+4$, and etc. We will call these chords \emph{special} chords. Now color these special chords with the colors 1 through $2k$ in the canonical order, i.e., in the order we see them when we traverse the circle clockwise starting from point 1. So far, we have used $4k$ out of $2n$ points, and the remaining $2n-4k$ points lie on $2k$ arcs determined by the $2k$ special chords. 
Let these arcs be $A_1,\dots,A_{k}$ (for the upper semicircle) and $A_{k+1},\dots,A_{2k}$ (for the lower semicircle) (see Figure~\ref{figure:lb_circle1} for an example).
Since we want $\ell$ to satisfy $2k\ell+4k=2n$, $\ell$ is defined to $\frac{n-2k}{k}$.
\begin{figure}
\begin{center}
\includegraphics[scale=0.4, bb=0 0 459 451]{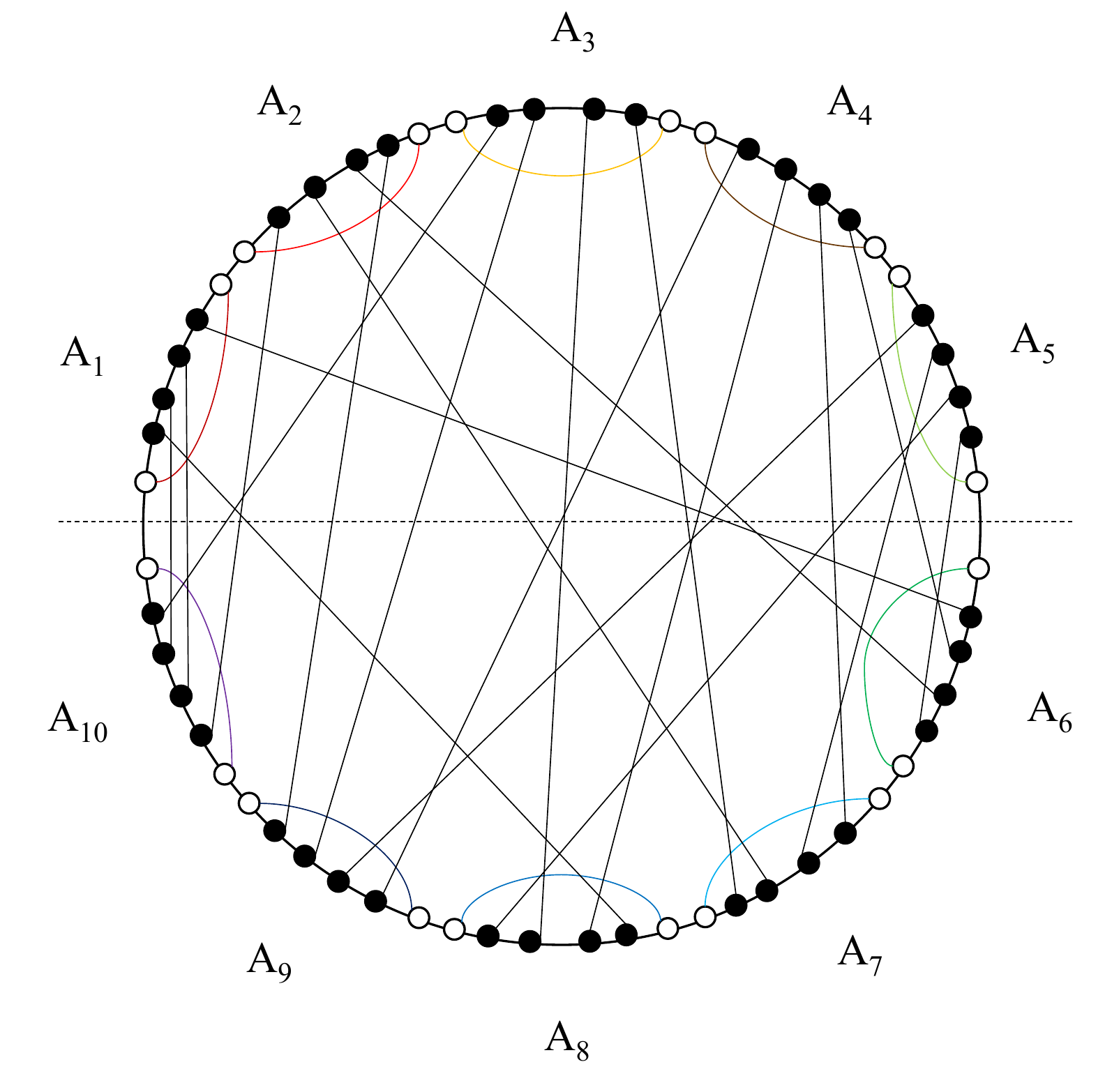}
\end{center}
\caption{$A_1,\dots,A_{k+1},\dots,A_{2k}$ when $k = 5$ and $\ell = 4$}
\label{figure:lb_circle1}
\end{figure}
%
From now on we will assume $k$ and $\ell$ are integers where $2k\ell+4k=2n$. (When they are not integers, we can easily modify the proof by taking appropriate ceilings and/or floors.)

Now what we want is to match the unused $k\ell$ points on the upper semicircle with the $k\ell$ unused points on the lower semicircle. The number of such matchings is $(k\ell)!$.
For each pair in the matching, if we draw the chord connecting the points in the pair, we get $n$ chords which gives a colored circle graph. (Each chord corresponds to a vertex.)
The $2k$ vertices corresponding to the special $2k$ chords are colored 1 through $2k$ (in the same canonical order), and the other $n-2k$ vertices are uncolored.

Let $M$ be a matching from $\cup_{i=1}^k A_i$ to $\cup_{j=k+1}^{2k} A_j$. We call $M$ a \textit{bad matching} if it contains a triple of pairs $((x_1,y_1), (x_2,y_2), (x_3,y_3))$ such that $x_1,x_2,x_3$ lie on $A_i$ for some $i\le k$ and $y_1,y_2,y_3$ lie on $A_{k+j}$ for some $j\le k$. Otherwise we call it a \textit{good matching} (denoted by $\cal M$). 
Then, the following lemma shows that when $n \to \infty$, almost all matchings are good. 

\begin{lemma}
Let $k= n^{3/4+\eps}$ for some fixed small $\eps>0$. 
For a random matching $M$, the expected number of triples of pairs $((x_1,y_1), (x_2,y_2), (x_3,y_3))$ in bad matching tends to 0 as $n \to \infty$. i.e,
$\frac{|\cal M|}{(k\ell)!}= 1-o(1)$
as $n\to \infty$. Consequently, almost all matchings are good.
\end{lemma}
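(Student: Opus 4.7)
The plan is to apply the first moment method. Let $X$ be the random variable that counts, for a uniformly random perfect matching $M$ between the $k\ell$ unused upper points and the $k\ell$ unused lower points, the number of unordered triples of pairs $\{(x_1,y_1),(x_2,y_2),(x_3,y_3)\} \subseteq M$ such that $x_1,x_2,x_3$ all lie on some common upper arc $A_i$ with $i \le k$ and $y_1,y_2,y_3$ all lie on some common lower arc $A_{k+j}$ with $j \le k$. By the definition in the paper, $M$ is bad precisely when $X \ge 1$, so it suffices to show $\mean{X} = o(1)$; Markov's inequality then yields $\pr{M \text{ is bad}} \le \mean{X} = o(1)$, which is equivalent to $|\mathcal{M}|/(k\ell)! = 1 - o(1)$.

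The expectation itself is computed by linearity. For any fixed $i,j \le k$, any three distinct points $\{x_1,x_2,x_3\} \subseteq A_i$, and any three distinct points $\{y_1,y_2,y_3\} \subseteq A_{k+j}$, the probability that the triple $\{x_1,x_2,x_3\}$ is matched exactly onto the triple $\{y_1,y_2,y_3\}$ under the uniformly random bijection $M$ equals
\[
\frac{3!\,(k\ell-3)!}{(k\ell)!} = \frac{6}{k\ell(k\ell-1)(k\ell-2)},
\]
since there are $3!$ ways to pair the six points and $(k\ell-3)!$ completions of any such partial matching to a full matching. Summing over the $k^2$ choices of the pair $(i,j)$ and the $\binom{\ell}{3}^2$ choices of unordered point triples on each side gives
\[
\mean{X} = k^2 \binom{\ell}{3}^2 \cdot \frac{6}{k\ell(k\ell-1)(k\ell-2)} = \Order\!\left(\frac{\ell^3}{k}\right) = \Order\!\left(\frac{n^3}{k^4}\right),
\]
where the last equality uses $\ell = (n-2k)/k = \Theta(n/k)$.

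Finally, substituting the hypothesis $k = n^{3/4+\eps}$ gives $n^3/k^4 = n^{-4\eps} = o(1)$, so $\mean{X} \to 0$ and, by the first-moment argument above, almost all matchings are good. I do not expect any real obstacle here: the combinatorial counting is entirely routine, and the exponent $3/4$ in the hypothesis on $k$ is precisely the threshold at which the estimate $\Theta(n^3/k^4)$ becomes negligible, so no delicate optimization is required. The only bookkeeping worth flagging is the integrality of $k$ and $\ell$ (which the excerpt has already agreed to absorb with ceilings and floors), and the fact that switching between ordered and unordered triples merely rescales the count by a factor of $3! = 6$ and so does not affect the conclusion.
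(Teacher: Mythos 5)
Your proof is correct and matches the paper's argument essentially line for line: the same first-moment calculation of $\mean{X} = k^2\binom{\ell}{3}^2 \cdot \frac{3!(k\ell-3)!}{(k\ell)!} = \Order(n^3/k^4)$ followed by Markov's inequality. The only addition is your (correct) remark that $k = n^{3/4+\eps}$ sits just above the threshold where this estimate vanishes.
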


\begin{proof}
Let $k$ and $M$ be as above and let $X$ denote the number of triplets $((x_1,y_1), (x_2,y_2), (x_3,y_3))$ in $M$ such that $x_1,x_2,x_3$ lie on $A_i$ for some $i\le k$ and $y_1,y_2,y_3$ lie on some arc $A_{k+j}$ for some $1\le j\le k$. 
Let  $X_{i,j}$ denote the number of triplets from $A_i$ to $A_{k+j}$ for $1\le i,j\le k$. Hence we have
$X=\sum_{i=1}^k\sum_{j=1}^k X_{i,j}$. Also
Letting $\mu=\mean{X}$ and $\mu_{i,j}=\mean{X_{i,j}}$, we have, by the linearity of the expectation, 
$\mu= \sum_{i=1}^k\sum_{j=1}^k \mu_{i,j}$.

In order to compute $\mu_{i,j}$, we take three points from each of $A_i$ and $A_{k+j}$ in ${\ell \choose 3}{\ell \choose 3}$ ways, and multiply this number by the probability that the first set of three points are matched to the second set of three points, which is $\frac{3!}{(k\ell)(k\ell-1)(k\ell-2)}$. Hence
\[
\mu_{i,j} = {\ell \choose 3}{\ell\choose 3}\, \frac{(k\ell-3)!3!}{(k\ell)!} = O\left( \frac{\ell^3}{k^3}\right).
\]
Since there are $k^2$ summands in the double sum, we get
\[
\mu=\sum_{i=1}^k\sum_{j=1}^k \mu_{i,j} = k^2O\left( \frac{\ell^3}{k^3}\right)= O(\ell^3/k)= O(n^3/k^4)\to 0.
\]
By Markov's inequality, we have
$\pr{X>0}=\pr{X\ge 1} \le \mean{X} =\mu =o(1)$.
Hence 
$\frac{|\cal M|}{(k\ell)!}= \pr{X=0}= 1-\pr{X\ge 1} \ge 1-o(1)$, 
as desired.
\end{proof}

\begin{figure}
\begin{center}
\includegraphics[scale=0.4, bb=200 150 800 400]{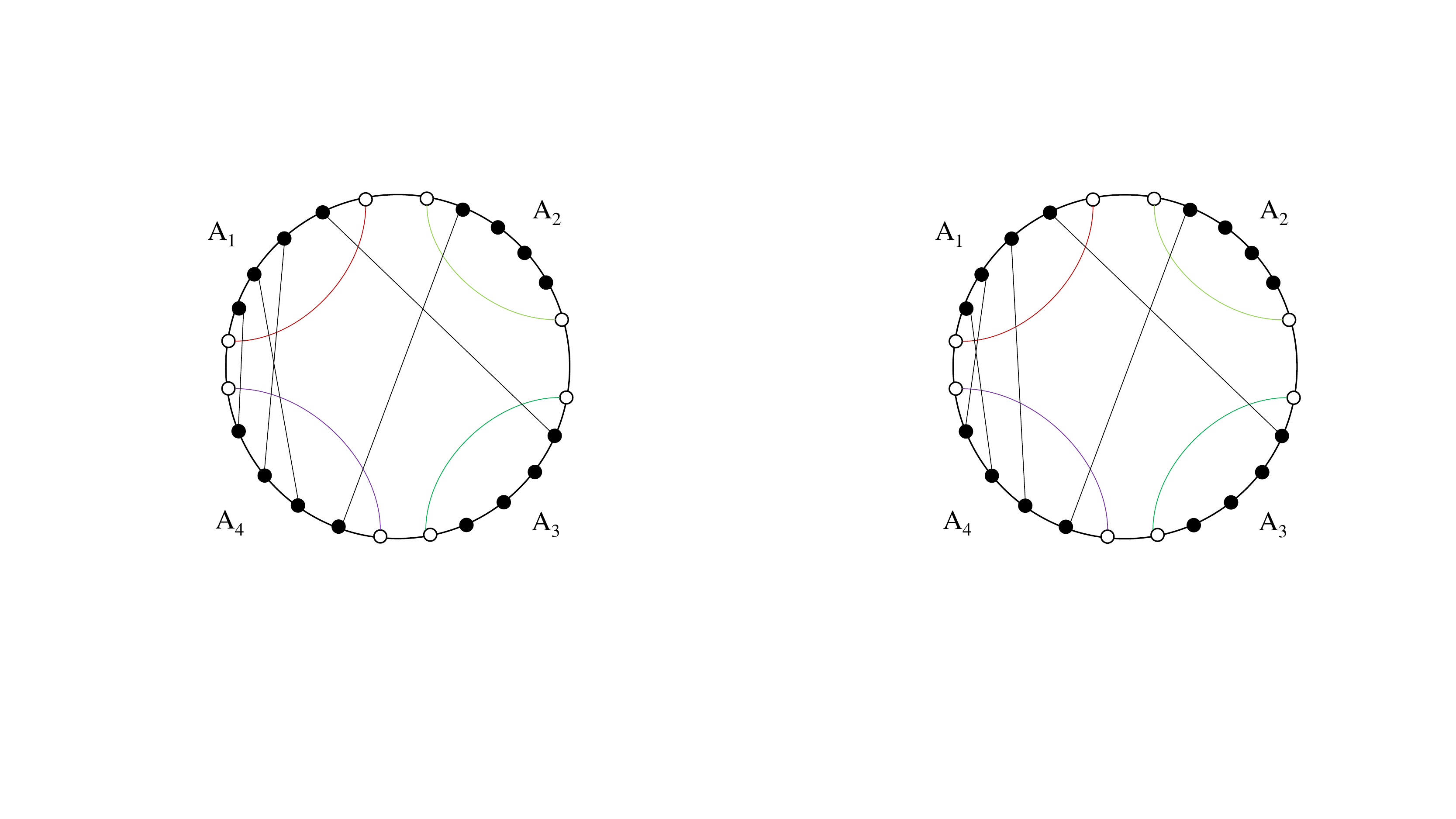}
\end{center}
\caption{Two different diagrams with the same graphs. (Suppose the unmatched points in the two diagrams are matched in the same way.)}
\label{figure:lb_circle2}
\end{figure}
Now we prove $\log C_n \ge n\log n-O(n)$, as $n\to \infty$. Let $k= n^{3/4+\eps}$. By the previous lemma, we have $|\cal M|= (1-o(1))(k\ell)!$.
The nice thing about a good matching is that it can be recovered from its (colored) circle graph. (In other words, there is a one to one matching between the set of good matchings and their corresponding graphs. Figure~\ref{figure:lb_circle2} shows that for bad matchings, this property does not hold) To see this, we first note that each colored vertex is unique; if we see a color $i$ on a vertex, that vertex corresponds to the $i$-th special chord. Next, each uncolored vertex has exactly two colored neighbors, which gives us the ability to determine the two arcs its endpoints lie on; if these two neighbors are colored $i$ and $j$, then the arcs containing the endpoints of the chord are $A_i$ and $A_j$. Finally, suppose two uncolored vertices $u$ and $v$ have a common neighbor with color $i$. Then one endpoint of each of the chords corresponding to $u$ and $v$ lie on $A_i$, and the relative order of these endpoints is determined by whether the vertices $u$ and $v$ are adjacent or not.
Hence
\begin{align*}
C_n {n \choose 2k}(2k)! & \ge \text{number of circle graphs with $2k$ colored vertices}\\
&\ge \text{number of circle graphs obtained from the construction above}\\
&\ge |\cal M| = (1-o(1))(k\ell)!= (1-o(1))(n-2k)!,
\end{align*}
where the last inequality follows from the fact that there is a one to one matching between the set of good matchings and their corresponding graphs.
Using
${n \choose 2k}(2k)! \le n^{2k}$,
we can write the above inequality as
$C_n n^{2k}\ge (1-o(1))(n-2k)!$.
Taking the logarithms and using the Stirling's approximation gives
$\log C_n = n\log n- O(n)$.		

We summarize the result in the following theorem.
\begin{theorem}\label{thm:lb_circle}
For unlabeled circle graph $G$ with $n$ vertices. Then at least $n\log n-\Order(n)$ bits are necessary to represent $G$.
\end{theorem}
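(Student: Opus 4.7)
The plan is to prove the lower bound by explicit construction: exhibit a large collection of chord diagrams whose circle graphs are pairwise distinct, which then forces $C_n$ to be large. To make the recovery of the diagram from the graph possible, I will ``mark'' a small number of distinguished chords by giving them unique colors, estimate how many of the resulting colored graphs are distinct, and then account for the overcounting introduced by the coloring.

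More concretely, I would fix parameters $k$ and $\ell$ with $2k\ell + 4k = 2n$ and place $2n$ points on a circle, with the first $n$ on the upper semicircle and the rest on the lower. The construction starts by drawing $2k$ ``special'' chords (one per color), $k$ on each semicircle, arranged so that each special chord cuts off an arc containing exactly $\ell$ of the remaining points. Call these arcs $A_1,\dots,A_k$ (upper) and $A_{k+1},\dots,A_{2k}$ (lower). The remaining $k\ell$ points on each semicircle are then paired off via an arbitrary perfect matching $M$, each pair becoming a chord. This gives $(k\ell)!$ colored chord diagrams on $n$ chords.

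The main technical step, which I expect to be the subtlest part, is identifying a subfamily of matchings that can be recovered uniquely from their underlying graph. I would define $M$ to be \emph{good} if no three of its pairs $(x_1,y_1),(x_2,y_2),(x_3,y_3)$ all have $x_1,x_2,x_3 \in A_i$ and $y_1,y_2,y_3 \in A_{k+j}$ for a common $(i,j)$. For good matchings, each uncolored vertex has exactly two colored neighbors identifying the two arcs its endpoints lie in, and within a shared arc the cyclic order of the remaining chord endpoints is recoverable from the adjacency pattern among the uncolored vertices (since two chords with endpoints on the same arc and whose other endpoints are on the same opposite arc can cross or nest, distinguishing the two cases by adjacency). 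The ``bad triple'' condition is precisely what blocks ambiguity, as illustrated by Figure~\ref{figure:lb_circle2}. A first-moment calculation with a uniformly random matching $M$ will show that, for $k = n^{3/4+\varepsilon}$, the expected number of bad triples is $O(n^3/k^4) = o(1)$, so by Markov's inequality at least a $(1-o(1))$-fraction of matchings are good.

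Finally, since each unlabeled circle graph on $n$ vertices gives rise to at most $\binom{n}{2k}(2k)!$ colored graphs (by choosing and ordering the $2k$ colored vertices), I would derive
\begin{equation*}
C_n \cdot \binom{n}{2k}(2k)! \;\ge\; (1-o(1))(k\ell)! \;=\; (1-o(1))(n-2k)!.
\end{equation*}
Bounding $\binom{n}{2k}(2k)! \le n^{2k}$ and taking logarithms with Stirling's formula yields $\log C_n \ge n\log n - O(k\log n) - O(n) = n\log n - O(n)$, using that $k\log n = n^{3/4+\varepsilon}\log n = o(n)$, which completes the proof.
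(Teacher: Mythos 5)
Your proposal follows the paper's proof essentially step for step: the same $2n$-point configuration with $2k$ colored special chords cutting off arcs $A_1,\dots,A_{2k}$, the same good/bad matching dichotomy ruling out three pairs landing on a common arc pair, the same first-moment bound $O(n^3/k^4)\to 0$ with $k=n^{3/4+\varepsilon}$, the same recovery argument from colored neighbors and pairwise adjacency, and the same counting inequality $C_n\binom{n}{2k}(2k)!\ge(1-o(1))(n-2k)!$ finished with Stirling. This is the paper's argument, correctly reproduced.
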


\subsection{Alternative succinct representation}\label{sec:ubcircle}
In this section, we give alternative succinct representation of circle graphs. 
Before describing the representation, we first introduce the \textit{orthogonal range queries on grids} which defined as follows. Given a set $P$ of $n$ points on an $n \times n$ 2-dimensional grid, the orthogonal range queries on grids are consist of the following queries:
\begin{itemize}
    \item $\cnt{}(P, R)$ : returns the number of points of $P$ within the rectangular range $R$.
    \item $\report{}(P, R)$ : reports the points of $P$ within the rectangular range $R$
\end{itemize}

The following lemma shows that there exists a succinct representation to support these queries efficiently, which we used in our representation.

\begin{lem}[\cite{DBLP:conf/wads/BoseHMM09}]\label{2drange}
Given a set of $n$ points $P$ on an $n \times n$ grid, there exists an $n \log n + o(n \log n)$-bit data structure, such that for any $(x, y) \in P$ and the rectangular range $R$, one can answer (i) $\cnt{}(P, R)$ queries in $O(\log n / \log {\log n})$ time, and (ii) $\report{}(P, R)$ queries in $O(k\log n / \log {\log n})$ time, where $k$ is the size of the output.
\end{lem}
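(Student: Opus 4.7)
The plan is to reduce the two-dimensional problem to a one-dimensional sequence-indexing problem and then build a wavelet tree of sufficiently large branching factor on the resulting sequence. First I would sort the $n$ points by $x$-coordinate (breaking ties by $y$), yielding a bijection between $\{1,\ldots,n\}$ and the set of $x$-values; after this reindexing the point set is captured by a single sequence $S[1\ldots n]$, where $S[i]$ is the $y$-coordinate of the point of $x$-rank $i$. An orthogonal range query on $[x_1,x_2]\times[y_1,y_2]$ then becomes: among $S[x_1\ldots x_2]$, count (respectively enumerate) the positions whose value lies in $[y_1,y_2]$.

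Second, I would build a wavelet tree on $S$ with branching factor $b=\lceil\log^{\varepsilon} n\rceil$ for a small constant $\varepsilon>0$. Every internal node stores a string over alphabet $\{0,1,\ldots,b-1\}$ telling, for each position that falls into it, the child to which that position is routed. The tree has height $O(\log_b n)=O(\log n/\log\log n)$, and at each level the concatenated node-strings contain exactly $n$ symbols of $\log b$ bits each, for a total payload of $n\log n$ bits summed over all levels. Equipping each of these small-alphabet strings with a rank/select structure adds $o(n\log n)$ bits of overhead, so the whole representation fits in $n\log n+o(n\log n)$ bits.

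Third, the $\cnt{}$ query proceeds by a standard top-down descent: at each visited node, $O(1)$ rank queries on its string determine, for every child, the sub-range of positions of $S[x_1\ldots x_2]$ routed to it. Children whose sub-alphabet of $y$-values is entirely inside $[y_1,y_2]$ contribute their full length and are not entered; only the two children per level that straddle $y_1$ or $y_2$ are recursed into. Hence the query visits $O(\log n/\log\log n)$ nodes at $O(1)$ work each. The $\report{}$ query is the same traversal but also enters any child whose sub-range is non-empty and whose sub-alphabet intersects $[y_1,y_2]$, charging each reported point the height of the tree and yielding $O(k\log n/\log\log n)$ total.

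The main obstacle is supporting $O(1)$-time rank/select on the node strings over alphabet of size $b=\log^{\varepsilon} n$ while keeping the space overhead $o(n\log n)$. Standard word-RAM techniques should suffice: pack each string into $\Theta(\log n)$-bit words and precompute universal tables of size $n^{o(1)}$ that answer rank and select on any single word in constant time. The choice $b=\log^{\varepsilon} n$ balances the two requirements $b\log b=o(\log n)$ (for negligible table/overhead space) and $\log_b n=\Theta(\log n/\log\log n)$ (for the advertised query time); all other parts of the construction are routine wavelet-tree bookkeeping.
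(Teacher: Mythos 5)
The lemma is quoted from Bose, He, Maheshwari, and Morin (WADS~2009); the paper cites it as a black box and does not give its own proof, so the fair comparison is against the construction in that reference. Your reconstruction via a multi-ary wavelet tree with branching factor $b=\lceil\log^{\varepsilon} n\rceil$ is essentially the same approach, and the main quantitative facts you state are correct: height $O(\log_b n)=O(\log n/\log\log n)$, payload $n\log n$ bits summed over the $\log_b n$ levels, and four-Russians tables for $O(1)$-time rank/select on small-alphabet blocks keeping the overhead $o(n\log n)$.

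One point you should be careful with is the claim that at each visited node ``$O(1)$ rank queries on its string determine, for every child, the sub-range of positions\dots routed to it.'' Taken literally this is too much: with $b=\log^{\varepsilon}n$ children you cannot afford to compute all $b$ sub-ranges at every node. What the construction actually needs, and what $O(1)$ operations do deliver, is a \emph{prefix-rank} (partial-sum over the sub-alphabet) query at the two positions $x_1-1$ and $x_2$ for the two boundary characters induced by $y_1$ and $y_2$; this returns in $O(1)$ time both the aggregate count contributed by children whose sub-alphabet lies entirely inside $[y_1,y_2]$ and the sub-ranges for the at most two boundary children you recurse into. With that reading the $O(\log n/\log\log n)$ counting bound is correct. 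Similarly, for reporting you should descend only into children that actually contain a reported point, and then recover each point's original coordinates by walking back down (or up) the tree; this requires the same prefix-rank plus select machinery and yields the stated $O(k\log n/\log\log n)$ bound. These are implementation details rather than gaps in the idea, and your high-level account matches the cited construction.
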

\begin{remark2}
When there exists no two points $p_1, p_2 \in P$ where $p_1 = (x, y_1)$ and $p_2 = (x, y_2)$ with some $1 \le x \le n$ and $1 \le y_1, y_2 \le n$, we define the query $Y(x, P)$ as $\report{}(P, [x, x] \times [1, n])$. Since the size of $\report{}(P, [x, x] \times [1, n])$ is at most 1 for all $1 \le x \le n$ in this case,
$Y(x, P)$ returns the value $y$ which satisfies $(x, y) \in P$ if such $(x, y)$ exists in $P$.
Also, $Y(x, P)$ query can be answered in $O(\log n / \log {\log {n}})$ time using the data structure of Lemma~\ref{2drange}. Similarly, if no two input points have the same $y$-coordinate, then we can define the query $X(y, P) = \report{}(P, [1, n] \times [y, y])$, which can be answered in $O(\log n / \log {\log {n}})$ time.

\end{remark2}
\begin{figure}
\begin{center}
\includegraphics[scale=0.45]{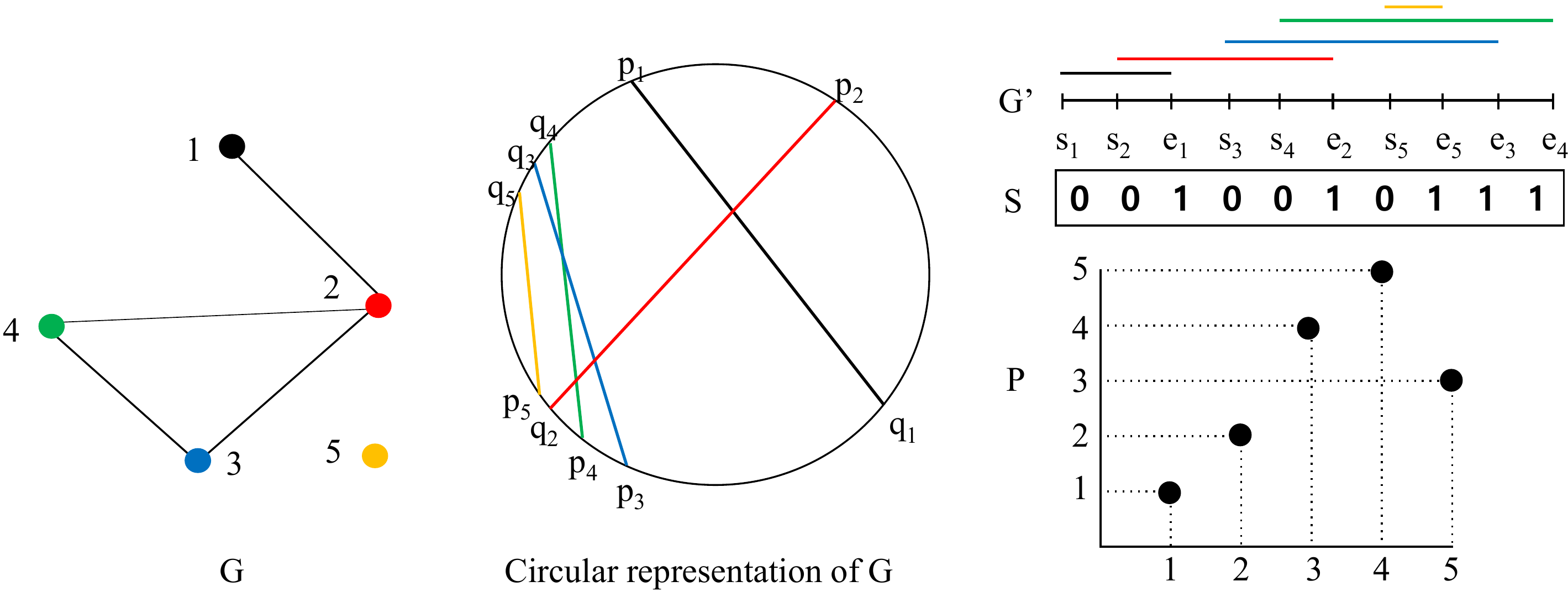}
\end{center}
\caption{A circle graph $G$ with $n=5$ vertices (left), circular representation of $G$ (middle), its corresponding overlap graph $G'$ (right top), and $S$ and $P$ for the representation of $G$.}
\label{figure:circle_upper}
\end{figure}
Now we describe the alternative succinct representation for circle graphs. Suppose $G = (V, E)$ is an intersection graph of the set of chords $\mathcal{C} = \{c_1 = (p_1, q_1), c_2 = (p_2, q_2) \dots, c_n = (p_n, q_n)\}$ of the circle $C$ 
where all the points in the set $P = \{p_1, p_2, \dots p_n\} \cup \{q_1, q_2, \dots, q_n\}$ on $C$ are distinct. Note that when $G$ is given, we can find the corresponding $\mathcal{C}$ in $O(n^2)$ time~\cite{DBLP:journals/jal/Spinrad94}.
Next, consider a bijective map $f$ from $P$ to $\{1, 2, \dots, 2n\}$ where for any $v \in P$, $f(v) = v'$ if and only if 
$v$ is the $v'$-th point from $p_1$ according to the clock-wise direction (we define $f(p_1) = 1$). 
Then for $1 \le i \le n$, $f$ maps the chord $c_i$ to the interval $I_i = [s_i, e_i] \subset [1, 2n]$, where $s_i = \min{}(f(p_i), f(q_i))$, and $e_i = \max{}(f(p_i), f(q_i))$.  
Note that since all the points in $P$ are disjoint, $\{f(p) \mid p \in P\} = \{1, 2, \dots, 2n\}$.
Now using the set of intervals $\mathcal{I} = \{I_1, I_2, \dots, I_n\}$, we define an \textit{overlap graph} $G' = (V', E')$ as follows: 
\begin{itemize}
    \item $V' = \{1, 2, \dots, n\}$, where for $1 \le i \le n$, the vertex $i$ corresponds to the interval $I_i$.
    \item For any vertices $i, j \in G'$, $(i, j) \in E'$ if and only if $I_i$ and $I_j$ are \textit{overlap}, i.e., $I_i \cap I_j \neq \emptyset$, $I_i \not\subset I_j$, and $I_j \not\subset I_i$.
\end{itemize}
It is well-known that $G$ and $G'$ are equal graphs (in general, any graph is a circle graph if and only if it is an overlap graph)~\cite{DBLP:journals/networks/Gavril73}. In the rest of this section, we refer to $G$ as the overlap graph $G'$. Now in what follows we describe our data structure for representing $G$.

\begin{enumerate}
    \item Let $S[1 \dots 2n]$ be a bit array of size $2n$ where for $1 \le i \le 2n$, $S[i] = 0$ (resp. $S[i] = 1$) if $i \in \{s_1, s_2, \dots, s_n\}$ (resp. $i \in \{e_1, e_2, \dots, e_n\}$). We maintain the data structure of Lemma~\ref{rankselect2} on $S$ using $2n +o(n)$ bits to support $\rank{}$ and $\select{}$ queries in $O(1)$ time. 
    \item For $1 \le i \le n$, let $e'_i = \rank{}_1(S, e_i)$. Since $\{e'_1, e'_2 \dots, e'_n\} =  \{1, 2, \dots, n\}$, we can consider the set of $n$ points $P = \{(1, e'_1), (2, e'_2), \dots (n, e'_n)\}$ on the $n \times n$ grid. We maintain $n \log n + o(n \log n)$-bit data structure of Lemma~\ref{2drange}, to answer $\cnt(P, R)$, and $Y(x,P)$ queries in $O(\log n / \log {\log n})$ time and $\report{}(P, R)$ queries in $O(k\log n / \log {\log n})$ time for any $1 \le x \le n$ and rectangular range $R$, where $k$ is the number of points in $P$ on $R$.
\end{enumerate}

See Figure~\ref{figure:circle_upper} for an example.
The total space of the above substructures takes $n \log n + \order(n \log n)$ bits, and for vertex $v \in V$, we can compute the corresponding interval $I_v = [s_v, e_v]$ in $\Order(\log n / \log {\log n})$ time by $s_v = \select{}_0 (v, S)$, and $e_v = \select{}_1(Y(v, P), S)$. 
Now for any two vertices $u, v \in G$, we show how to support $\degree{}(v)$ and $\adjacent{}(u, v)$ query in $\Order(\log n / \log {\log n})$ time and $\neighbor{}$ query in $\Order(|\degree{}(v)|\log n / \log {\log n})$ time using our representation.
\\\\
\noindent\textbf{$\degree{}\boldsymbol{(v)}$ query: } To answer $\degree{}(v)$ query, we first compute the corresponding interval $I_v$ in $\Order(\log n / \log {\log n})$ time, and count 
the number of intervals overlap with $I_v$, which is the sum of (i) the number of intervals $I_p$ where $s_p < s_v$ and $s_v < e_p < e_v$, and (ii) the number of intervals $I_p$ where $s_v < s_p < e_v$ and $e_p > e_v$.
By the definition of the set $P$, the number of intervals in (i) is same as the answer of 
$\cnt{}(P, R_1)$ which can be computed in $\Order(\log n / \log \log n )$ time, where $R_1 = [1, \rank{}_0(s_v, S)-1] \times [\rank{}_1(s_v, S)+1, \rank{}_1(e_v, S)]$. Similarly we can count the number of intervals in (ii) by returning $\cnt{}(P, R_2)$ in $O(\log n / \log \log n )$ time, where $R_2 = [v+1, \rank{}_0(e_v, S)] \times [\rank{}_1(e_v, S)+1, n]$ (note that $R_1$ and $R_2$ can be computed in $O(1)$ time when $I_v = [s_v, e_v]$ is given). Thus by Lemma~\ref{2drange}, we can answer $\degree{}(v)$ query in $O(\log n / \log {\log n})$ time in total.
\\
\noindent\textbf{$\adjacent{}\boldsymbol{(u, v)}$ query: }To answer $\adjacent{}(u, v)$ query, it is enough to check whether the corresponding intervals $I_u$ and $I_v$ are overlap or not. Since we can compute $I_u$ and $I_v$ in $\Order(\log n / \log {\log n})$ time, $\adjacent{}(u, v)$ query can be answered in $\Order(\log n / \log {\log n})$ time. 
\\
\noindent\textbf{$\neighbor{}\boldsymbol{(v)}$ query: } To answer $\neighbor{}(v)$ query, we simply report all the intervals in (i) and (ii) which are mentioned in the $\degree{}(v)$ query. Thus, we can answer $\neighbor{}(v)$ query in $\Order(|\degree{}(v)|\log n / \log {\log n})$ time by returning the first coordinates of the answer of $\report{}(P, R_1)$ and $\report{}(P, R_2)$ queries, where $R_1$ and $R_2$ are rectangular ranges in the grid which are defined same as the above.

We summarize our result in the following theorem
\begin{theorem}\label{thm:ub_circle}
Let $G$ be an unlabeled circle graph with $n$ vertices. Then there exists an $(n\log n + \order(n \log n))$-bit data structure representing $G$ that supports $\degree{}(v)$ and $\adjacent{}(u, v)$ queries in $\Order(\log n / \log {\log n})$ time, and $\neighbor{}(v)$ queries in $\Order(|\degree{}(v)| \cdot$ $\log n / \log {\log n})$ time.
\end{theorem}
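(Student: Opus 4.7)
The plan is to verify the space and time bounds for the representation described in Section~\ref{sec:ubcircle}, checking that the two advertised sub-structures combine into a succinct oracle. First I would account the total space: the bit vector $S$ with rank/select support contributes $2n+o(n)$ bits by Lemma~\ref{rankselect2}, while the grid structure on the point set $P=\{(i,e'_i)\}_{i=1}^{n}$ contributes $n\log n+o(n\log n)$ bits by Lemma~\ref{2drange}, summing to $n\log n+o(n\log n)$ as claimed. The existence of the set $\mathcal{C}$ of chords realizing $G$ is not part of the stored object; it is used only to define the encoding (and is computable from $G$ by Spinrad's recognition).

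Next I would establish a constant-cost "dictionary" that converts between the vertex name $v$, its grid coordinate $e'_v$, and its chord endpoints $s_v,e_v$ on $[1,2n]$: namely $s_v=\select_0(v,S)$ in $O(1)$ time, $e'_v=Y(v,P)$ in $O(\log n/\log\log n)$ time, and $e_v=\select_1(e'_v,S)$ in $O(1)$ time. Once $I_v=[s_v,e_v]$ is in hand, $\adjacent(u,v)$ reduces to computing $I_u,I_v$ and directly testing the overlap condition on integers, giving the $O(\log n/\log\log n)$ bound.

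The crux of the proof is the correctness of the two rectangles $R_1,R_2$ used for $\degree$ and $\neighbor$. Since every position in $[1,2n]$ is either a left or a right endpoint of exactly one chord (no ties), the set of intervals overlapping $I_v$ partitions into (i) $I_p$ with $s_p<s_v<e_p<e_v$ and (ii) $I_p$ with $s_v<s_p<e_v<e_p$. I would then translate these integer conditions to the $n\times n$ grid: the $x$-coordinate of $p$ is the index of $s_p$ among left endpoints, and its $y$-coordinate is the index of $e_p$ among right endpoints, so "$s_p<s_v$" becomes $x\in[1,\rank_0(s_v,S)-1]$ and "$s_v<e_p<e_v$" becomes $y\in[\rank_1(s_v,S)+1,\rank_1(e_v,S)]$, giving $R_1$; the analogous conversion yields $R_2$. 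Thus $\cnt(P,R_1)+\cnt(P,R_2)=\degree(v)$ and $\report(P,R_1)\cup\report(P,R_2)$ gives $\neighbor(v)$, with times $O(\log n/\log\log n)$ and $O(|\degree(v)|\cdot\log n/\log\log n)$ respectively by Lemma~\ref{2drange}.

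The main obstacle is exactly this bookkeeping step: making sure the rank-based translation from the $2n$-point circle to the $n\times n$ grid preserves the strict overlap conditions without double-counting the containment cases or missing boundary points, and that $R_1$ and $R_2$ are disjoint (which follows since their $x$-ranges $[1,v-1]$ and $[v+1,n]$ are disjoint once one notes that the $x$-coordinate of $p$ equals $p$ by definition of $P$). Everything else is assembly of Lemmas~\ref{rankselect2} and~\ref{2drange}.
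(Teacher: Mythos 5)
Your proposal is correct and follows essentially the same approach as the paper: store the bit vector $S$ with rank/select ($2n+o(n)$ bits), store the point set $P=\{(i,e'_i)\}$ in the $n\log n+o(n\log n)$-bit range structure of Lemma~\ref{2drange}, recover $I_v=[s_v,e_v]$ via $\select_0$, $Y$, and $\select_1$, and reduce $\degree$/$\neighbor$ to two disjoint rectangle counts/reports corresponding to the cases $s_p<s_v<e_p<e_v$ and $s_v<s_p<e_v<e_p$. One small imprecision: the paper's $R_2$ has $x$-range $[v+1,\rank_0(e_v,S)]$ (not $[v+1,n]$); the upper cap $\rank_0(e_v,S)$ is what enforces $s_p<e_v$, so omitting it would overcount. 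Your disjointness observation still holds since $[v+1,\rank_0(e_v,S)]\subseteq[v+1,n]$, but if you actually instantiated $R_2$ with $x$-range $[v+1,n]$ the degree would be wrong.
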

\section{Trapezoid graphs}\label{sec:trapezoid}
In this section, we give the lower bound on space for representing trapezoid graphs, 
which implies that the representation of Theorem~\ref{thm:generaluppernew} gives a succinct representation of trapezoid graphs. Also we give an alternative succinct representation of trapezoid graphs, which uses the similar idea as Theorem~\ref{thm:ub_circle} to answer $\degree{}(v)$ queries independent of $|\degree{}(v)|$ (and again, it takes more time for the other two queries compared to the representation of Theorem~\ref{thm:generaluppernew}).

\subsection{Lower bound}\label{sec:trapezoidlb}
We can obtain a lower bound on the number of trapezoid graphs, 
intersection graphs of trapezoids where their corners are on two parallel lines. 

\begin{theorem}\label{lem:trapezoid_lb}
Consider a family of intersection graphs made from
$n$ trapezoids on two parallel lines.
Let $P_{n}$ denotes the total number of such graphs.
Then the following holds:
$$
\log P_{n} 
\geq 3n \log{n} -4 n \log\log n - \Order(n).
$$
\end{theorem}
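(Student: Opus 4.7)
The plan is to mirror the coloring-based lower bound of Theorem~\ref{th:generalLB} in the linear two-line setting. Set $m = n/\log n$ and take $m$ pairwise disjoint trapezoids to be the ``colored'' ones, labeled $C_1,\dots,C_m$; since any two disjoint trapezoids are comparable in the product order on the two lines, the colored trapezoids necessarily form a linear chain with $C_i$ lying entirely to the left of $C_{i+1}$. For each of the remaining $n-m$ uncolored trapezoids I would specify a $4$-tuple $(i,j,k,l)\in[m]^4$ with $i\le j$ and $k\le l$, placing its upper-left, upper-right, lower-left, and lower-right corners inside the upper interval of $C_i$, the upper interval of $C_j$, the lower interval of $C_k$, and the lower interval of $C_l$, respectively. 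The number of valid $4$-tuples is $\binom{m+1}{2}^2$, and the construction assigns distinct tuples to distinct uncolored trapezoids.

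Following the counting template of Theorem~\ref{th:generalLB},
\[
P_n \;\ge\; \frac{1}{m!\,\binom{n}{m}}\,\binom{\binom{m+1}{2}^2}{n-m},
\]
where the denominator accounts for choosing and labeling the $m$ colored vertices among the $n$ total. Applying $\binom{a}{b}\ge (a/b)^b$, the estimate $\log\binom{m+1}{2}=2\log m-O(1)$, and $m\log n=n$, a direct calculation gives $\log P_n\ge 3n\log n-4n\log\log n-O(n)$, matching the statement.

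The main obstacle is the distinctness step: showing that different $(n-m)$-subsets of $4$-tuples produce different unlabeled graphs. Unlike the circular setting of Theorem~\ref{th:generalLB}, where the colored neighborhood of an uncolored polygon directly reveals the chosen subset of colored polygons, the colored neighborhood of an uncolored trapezoid here is always the interval $[\min(i,k),\max(j,l)]\subseteq[m]$, so only two of the four parameters are visible from the colored adjacencies. To recover the missing two, I would exploit the uncolored-to-uncolored adjacencies: two uncolored trapezoids with tuples $(i,j,k,l)$ and $(i',j',k',l')$ are non-adjacent precisely when either $j<i'$ and $l<k'$, or $j'<i$ and $l'<k$, i.e., exactly when one is dominance-left of the other in the product order on the two lines. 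By choosing the corner positions canonically inside each colored interval and exploiting this partial-order structure on the uncolored vertices, one can verify that each uncolored trapezoid's full $4$-tuple is recoverable from the labeled graph (up to the color relabeling already quotiented out by $m!\binom{n}{m}$), which is the technical heart of the argument and completes the distinctness step needed to close the counting inequality.
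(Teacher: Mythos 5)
Your counting template and final arithmetic are the right shape, but the construction you chose has a genuine gap in the distinctness step, and the paper's proof is different precisely to avoid this gap.

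The difficulty is that a trapezoid is a ``fat'' object: an uncolored trapezoid whose upper corners sit in the upper intervals of $C_i$ and $C_j$ and whose lower corners sit in the lower intervals of $C_k$ and $C_l$ intersects \emph{every} colored trapezoid $C_t$ with $\min(i,k)\le t\le \max(j,l)$, not just the four at its corners. So with your single chain of $m$ colored trapezoids, the colored neighborhood of an uncolored vertex is the interval $[\min(i,k),\max(j,l)]\subseteq[m]$ and genuinely reveals only two of the four parameters, exactly as you note. The rescue you propose via uncolored-to-uncolored adjacencies does not work. Concretely, take the two tuple sets $\{(1,3,2,2),(2,2,1,3)\}$ and $\{(1,3,2,2),(2,3,1,2)\}$: in each set both uncolored trapezoids have colored neighborhood $[1,3]$ and are adjacent to one another (neither is dominance-left of the other in either set), so the two sets produce isomorphic colored graphs even though the tuple sets differ. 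Your inequality $P_n\ge \frac{1}{m!\binom{n}{m}}\binom{\binom{m+1}{2}^2}{n-m}$ therefore does not follow, because the map from tuple-sets to colored graphs is not injective. (This is also the silent difference from the circle setting of Theorem~\ref{th:generalLB}, where polygons with corners on the circle intersect only the colored polygons actually containing their corners, so the colored neighborhood there does pin down all endpoints.)

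The paper sidesteps this by not using a single chain. It sets $m=n/\log n$, divides each parallel line into six segments $U_1,\dots,U_6$ and $L_1,\dots,L_6$, and places $4m$ colored \emph{lines} (degenerate thin trapezoids) in four disjoint families: $m$ lines from $U_1$ to $L_3$, $m$ from $U_2$ to $L_2$, $m$ from $U_6$ to $L_4$, and $m$ from $U_5$ to $L_5$. Each uncolored trapezoid has one corner in each of $U_2$, $L_3$, $L_4$, $U_5$, and by the arrangement of the segments each family of colored lines ``reads off'' exactly one of the four corners (for instance, a $U_1$-to-$L_3$ line is always to the left of an uncolored trapezoid's $U_2$ corner, so intersection is decided solely by the relative order in $L_3$). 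Hence the colored neighborhood directly recovers all four corner parameters, and distinctness is immediate once one requires that no two uncolored trapezoids meet the same set of colored lines. There are $(m+1)^4$ choices per uncolored trapezoid, giving $\binom{(m+1)^4}{n-4m}$ colored graphs, and dividing by $(4m)!\binom{n}{4m}$ and taking logarithms gives $3n\log n-4n\log\log n-\Order(n)$. To repair your proof you would either need to adopt this four-family decomposition or find a genuinely different way to make the four parameters recoverable; the dominance-order observation you sketch is not sufficient.
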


\begin{figure}[bt]
\begin{center}
\includegraphics[scale=0.4]{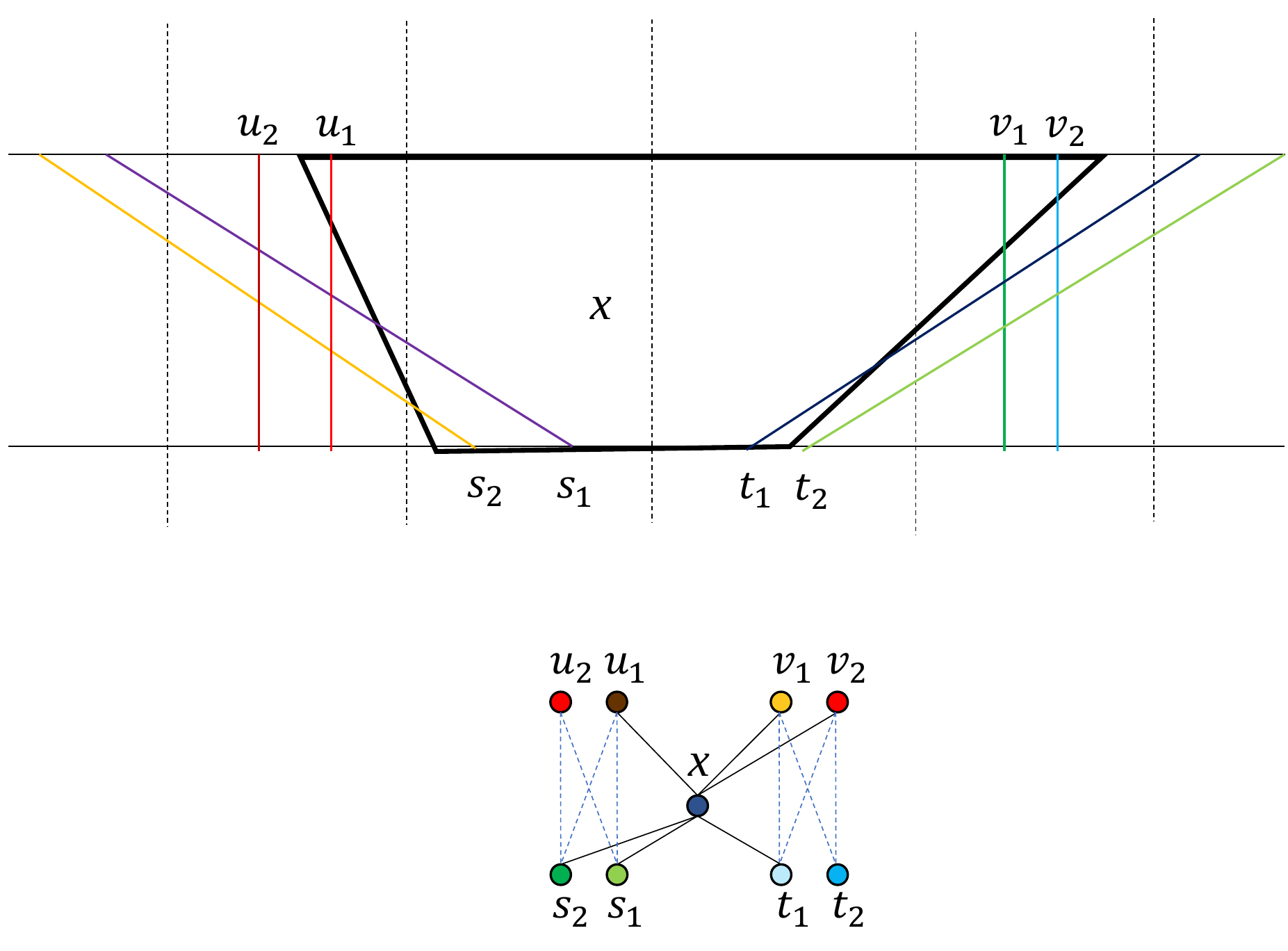}
\end{center}
\caption{A trapezoid graph (top) and its corresponding intersection graph (bottom).}
\label{fig:circle_trapezoid_lb}
\end{figure}

\begin{proof}
{Let $m = \frac{n}{\log n}$. We first consider partially-colored trapezoids with $4m$ colored trapezoids and $n-4m$ uncolored ones. We represent each colored trapezoid with a line, which could be thought as a thin trapezoid. To construct our diagrams, we divide the upper and lower lines into six pieces, $U_1,\dots, U_6$ and $L_1,\dots,L_6$, respectively, from left to right. From each of the pairs $(U_1,L_3)$, $(U_2,L_2)$, $(U_6,L_4)$, and $(U_5,L_5)$, we get $m$ colored parallel lines. Then, we draw $n-4m$ uncolored trapezoids, each of which has exactly one corner from each of the segments $U_2$, $L_3$, $L_4$, and $U_5$. We also make sure that two uncolored trapezoids do not intersect exactly the same set of colored lines. (See Figure~\ref{fig:circle_trapezoid_lb}.) Note that, in the graph corresponding to a diagram, the colored neighbors of an uncolored vertex gives us where the corners of the trapezoid corresponding to that vertex are located. There are $(m+1)^4$ possible uncolored trapezoids, and hence there are ${{(m+1)^4} \choose {n-4m}}$ different colored intersection graphs coming from this construction.  The claim follows from similar arguments used in previous lemmas.
}
\end{proof}

\subsection{Alternative succinct representation}
Our representation of trapezoid graphs uses the similar idea as the representation of 
Section~\ref{sec:ubcircle}
which uses orthogonal range queries. Suppose $G = (V, E)$ is given as the representation of $n$ trapezoids $T_1, T_2, \dots, T_n$ 
on two lines $L_1$ and $L_2$ in the two-dimensional space
which are parallel to the $x$-axis
as follows. 
For $1 \le i \le n$, $T_i$ is the trapezoid corresponding to the vertex $i \in G$, which has two points $a_i$ and $b_i$ from $L_1$ and other two points $c_i$ and $d_i$ from $L_2$ where $a_i$ is the $i$-th leftmost point on $L_1$ among $\{a_1, a_2, \dots, a_n\}$.
We interchangeably use names of points and their $x$-coordinates.
Also without loss of generality, we assume that $a_i < b_i$, $c_i < d_i$, and there is no point $p \in \{a_1, a_2, \dots, a_n\} \cup \{b_1, b_2, \dots, b_n\}$ and $q \in \{c_1, c_2, \dots, c_n\} \cup \{d_1,d_2, \dots, d_n\}$ such that the line $\overline{pq}$ is not perpendicular to neither $L_1$ nor $L_2$. 
Note that one can obtain such representation from $G$ in $O(n(n+m))$ time, where $m$ is the number of edges in $G$~\cite{DBLP:journals/dam/MertziosC11}.
We denote four sets $V_0$, $V_1$, $V_2$, and $V_3$ as $\{a_1, a_2, \dots, a_n\}$, $\{b_1, b_2, \dots, b_n\}$, $\{c_1, c_2, \dots, c_n\}$, and $\{d_1,d_2, \dots, d_n\}$ respectively. Now the following shows our representation of $G$.
\begin{enumerate}
    \item Consider an imaginary line $L_3$ parallel to $L_1$ and $L_2$.
    We project all the points in $V_0 \cup V_1 \cup V_2 \cup V_3$ orthogonally onto $L_3$, and denote the set of theses $4n$ points as $P'$. Now we consider an array $S[1, 2, \dots, 4n]$ of size $4n$ over an alphabet $\{0,1,2,3\}$ where for $1 \le i \le 4n$, $S[i] = j$ if $i$-th leftmost point in $P'$ is the point from $V_j$. We maintain the data structure of Lemma~\ref{rankselect} on $S$ using $4n +o(n)$ bits to support $\rank{}$ and $\select{}$ queries in $O(1)$ time.
    
    \item for $1 \le i \le n$, let $b'_i = \rank{}_1(S, b_i)$, $c'_i = \rank{}_2(S, c_i)$, and $d'_i = \rank{}_3(S, d_i)$.
     Then we can consider three sets of $n$ distinct points $P_1 = \{(1, b'_1), (2, b'_2), \dots (n, b'_n)\}$, $P_2 = \{(1, c'_1), (2, c'_2), \dots (n, c'_n)\}$, and $P_3 = \{(b'_1, d'_1), (b'_2, d'_2), \dots (b'_n, d'_n)\}$ on the $n \times n$ grid. We maintain $3n \log n + o(n \log n)$-bit data structure of Lemma~\ref{2drange}, to answer $\cnt{}$, and $Y$, and $\report{}$ queries on $P_1$, $P_2$, and $P_3$ efficiently. 
\end{enumerate}

The total space of the above substructures takes $3n \log n + \order(n \log n)$ bits, which is succinct by Theorem~\ref{lem:trapezoid_lb}. Also for vertex $v \in V$, we can compute the corresponding points $(v, b'_v) \in P_1$, $(v, c'_v) \in P_2$, and $(b'_v, d'_v) \in P_3$ in $\Order(\log n / \log {\log n})$ time by $b'_v = Y(v, P_1)$, , $c'_v = Y (v, P_2)$, and $d'_v = Y(b_2, P_3)$. 
Now for any two vertices $u, v \in G$, we show how to support $\degree{}(v)$ and $\adjacent{}(u, v)$ query in $\Order(\log n / \log {\log n})$ time and $\neighbor{}$ query in $\Order(|\degree{}(v)|\log n / \log {\log n})$ time using our representation.
\\\\
\noindent\textbf{$\adjacent{}\boldsymbol{(u, v)}$ query: }
To answer $\adjacent{}(u, v)$ query, it is enough to check whether the corresponding Trapezoids $T_u$ and $T_v$ are intersect or not, and it is clear that $T_u$ and $T_v$ are not intersect if and only if 
(i) $\max{}(b'_u, d'_u) < \min{}(a'_v, c'_v)$ , or (ii) $\min{}(a'_u, c'_u) > \max{}(b'_v, d'_v)$. 
Since we can compute all these eight values in $\Order(\log n / \log {\log n})$ time, $\adjacent{}(u, v)$ query can be answered in the same time. 
\\
\noindent\textbf{$\degree{}\boldsymbol{(v)}$ query: } 
To answer $\degree{}(v)$ query, we count the number of vertices in $G$ which is not adjacent to $v$, which is a total size of disjoint union of two sets of trapezoids satisfying the above conditions (i) and (ii) respectively. 
To count the number of trapezoids satisfying the condition (i), let $p_1 = (p^1_{x}, p^1_{y})$ where $p^1_{x} = (\rank{}_0(S, \max{}(b'_v, d'_v))$ and $p^1_y = \rank{}_2(S, \max{}(b'_v, d'_v))$. 
Then we can count the number of such trapezoids in $O(\log n / \log {\log n})$ time
by $\cnt{}(P_2, R_1)$ where $R_1 = [p^1_x, n] \times [p^1_y, n]$.
Similarly, let $p_2 = (p^2_{x}, p^2_{y})$ where $p^1_{x} = (\rank{}_1(S, \min(a'_v, c'_v))$ and $p^2_y = \rank{}_3(S, \min{}(a'_v, c'_v))$. 
Then we can count the number of trapezoids satisfying the condition (ii) in $O(\log n / \log {\log n})$ by $\cnt{}(P_3, R_2)$ where $R_2 = [1, p^2_x] \times [1,  p^2_y]$.
\\
\noindent\textbf{$\neighbor{}\boldsymbol{(v)}$ query: } To answer $\neighbor{}(v)$ query, we simply report all the trapezoids not in (i) and (ii) which are mentioned in the $\degree{}(v)$ query. Thus, we can answer $\neighbor{}(v)$ query in $\Order(|\degree{}(v)|\log n / \log {\log n})$ time by computing the first coordinates of the answer of $\report{}(P_2, R^1_1)$, $\report{}(P_2, R^2_1)$, $\report{}(P_3, R^1_2)$, and $\report{}(P_3, R^2_2)$ queries, where $R^1_1 = [1, p^1_x-1] \times [1, n]$, $R^2_1 = [ p^1_x, n] \times [1, p^1_y-1]$, $R^1_2 = [1, n] \times [p^2_y+1, n]$, and $R^2_2 = [p^2_x+1, n] \times [1, p^2_y]$.
Next, we return all the trapezoids corresponding to these coordinates by simply returning such coordinates (for the case of the queries on $P_2$), or by returning the corresponding answers of $X$ query on $P_1$ (for the case of the queries on $P_3$). Note that we return the same trapezoids at most $2$ times, but it does not affect the query time asymptotically.

We summarize or result in the following theorem.
\begin{theorem}\label{thm:trapezoid_ub}
Let $G$ be an unlabeled trapezoid graph with $n$ vertices. Then
there exists a ($3n\log n + \order(n \log n))$-bit data structure representing $G$ such that $\degree{}(v)$ and $\adjacent{}(u, v)$ query can be answered in $\Order(\log n / \log {\log n})$ time, and $\neighbor{}(v)$ query can be reported in $\Order(|\degree{}(v)|\cdot \log n / \log {\log n})$ time.
\end{theorem}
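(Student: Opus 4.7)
The plan is to construct a three-layer data structure that reduces adjacency and neighborhood queries on a trapezoid graph to orthogonal range queries on $n \times n$ grids. First I would invoke a polynomial-time algorithm~\cite{DBLP:journals/dam/MertziosC11} to extract a trapezoid representation of $G$, perturb it so that all $4n$ corner projections onto an auxiliary line $L_3$ have distinct $x$-coordinates, and record the type of each projected point (which of the four corner roles $a, b, c, d$ it plays) in an array $S[1 \ldots 4n]$ over the alphabet $\{0,1,2,3\}$. By Lemma~\ref{rankselect}, $S$ together with $\rank{}$ and $\select{}$ support fits in $\Order(n)$ bits. On top of $S$ I would build three $n$-point grids $P_1, P_2, P_3$ recording, respectively, the pairing between each $a$-point and its $b$-point, each $a$-point and its $c$-point, and each $b$-point and its $d$-point; each grid is stored using Lemma~\ref{2drange} in $n\log n + o(n\log n)$ bits, giving a total of $3n\log n + o(n\log n)$ bits, which is succinct by Theorem~\ref{lem:trapezoid_lb}.

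For the queries I would first establish the geometric characterization that two trapezoids $T_u, T_v$ fail to intersect if and only if one of them lies entirely to the left of the other on \emph{both} lines $L_1$ and $L_2$, and then translate this into two disjoint rectangular conditions on $P_2$ and $P_3$ respectively. The $\adjacent{}(u,v)$ query then executes a constant number of $\rank{}$/$\select{}$ lookups on $S$ and $Y$-queries on the grids, running in $\Order(\log n / \log \log n)$ time. For $\degree{}(v)$, I would compute the number of non-neighbors by one $\cnt{}(P_2, R_1)$ call (trapezoids entirely to the left of $T_v$) and one $\cnt{}(P_3, R_2)$ call (trapezoids entirely to the right), then subtract their sum from $n-1$; the corner points of $R_1$ and $R_2$ are derived from a handful of $\rank{}$ queries on $S$. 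For $\neighbor{}(v)$, I would split the complement of the non-neighbor rectangles into a constant number of axis-aligned sub-rectangles and issue $\report{}$ queries on $P_2$ and $P_3$, translating any point reported from $P_3$ back to its vertex index via an auxiliary $X$-query on $P_1$.

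The main obstacle I expect is showing that the two disjunctive cases for non-adjacency yield genuinely disjoint subsets of vertices expressible as axis-aligned rectangles on the appropriate grids, so that the $\cnt{}$ values add correctly; and that the complementary rectangles used in $\neighbor{}(v)$ cover each neighbor only $\Order(1)$ times, so that the $\Order(|\degree{}(v)|\cdot \log n / \log \log n)$ bound survives any deduplication. A careful case analysis of the relative orderings of the four corners $a_v, b_v, c_v, d_v$ against the corresponding corners of a candidate $T_u$ should suffice, and no explicit duplicate-suppression structure is needed provided the constant overlap between the report regions is acceptable.
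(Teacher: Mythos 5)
Your proposal matches the paper's own proof essentially step for step: the same array $S$ over $\{0,1,2,3\}$ recording the corner-role of each of the $4n$ projected points, the same three grids $P_1$ ($a$-to-$b'$), $P_2$ ($a$-to-$c'$), $P_3$ ($b'$-to-$d'$) stored via Lemma~\ref{2drange}, the same reduction of non-adjacency to the two ``entirely left'' / ``entirely right'' disjuncts expressed as rectangles on $P_2$ and $P_3$, and the same complement-and-split strategy for $\neighbor{}$ with $X$-queries on $P_1$ to translate $P_3$-points back to vertex indices. No meaningful deviation from the paper's route.
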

\section{Conclusion and Final Remarks}\label{conclusion}

In this article we proved a unified space lower bound for several classes of intersection graphs on a circle. Subsequently, we designed succinct navigational oracles for these classes of graphs in a uniform manner, along with efficient support for queries such as degree, adjacency and neighborhood. We conclude with the following open problem: 
can we improve the query times of our data structures, possibly to constant time?

\bibliographystyle{elsarticle-num-names}
\bibliography{ref}
\end{document}